\definecolor{darkgreen}{rgb}{0.0,0,0.9}
\newtcolorbox{wbox}
{
	colback  = white,
}
\let\R\relax
\newcommand*{\R}{\mathbb{R}}
\newcommand*{\suppress}[1]{}
\newcommand*{\cR}{\mathcal{R}}
\def\thm@space@setup{%
	\thm@preskip= 10pt
	\thm@postskip=\thm@preskip % or whatever, if you don't want them to be equal
}
\renewcommand{\paragraph}{%
	\@startsection{paragraph}{4}%
	{\z@}{5pt}{-1em}%
	{\normalfont\normalsize\bfseries}%
}
\newtheorem{theorem}{Theorem}
\newtheorem{lemma}{Lemma}
\newtheorem{corollary}{Corollary}
\theoremstyle{definition}
\newtheorem{definition}{Definition}
\newtheorem{remark}{Remark}
\newtheorem{example}{Example}
\newenvironment{fminipage}%
{\begin{Sbox}\begin{minipage}}%
		{\end{minipage}\end{Sbox}\fbox{\TheSbox}}
\newcommand\QQ{\boldsymbol{\mathit{Q}}}
\newcommand\ZZ{\boldsymbol{\mathit{Z}}}
\newcommand*{\ov}[1]{\bar{#1}}
\newcommand*{\un}[1]{\underline{#1}}
\title{New Characterizations of Core Imputations of\\
 Matching and $b$-Matching Games} 
\author[1]{Vijay V.~Vazirani\footnote{Supported in part by NSF grants CCF-1815901 and CCF-2230414.}}
\affil[1]{University of California, Irvine}
\date{}
\begin{document}
	\maketitle
	
	\begin{abstract}
We give new characterizations of core imputations for the following games:

\begin{enumerate}
	\item The assignment game.
	\item Concurrent games, i.e., general graph matching games having non-empty core.
	\item The unconstrained bipartite $b$-matching game (edges can be matched multiple times).
	\item The constrained bipartite $b$-matching game (edges can be matched at most once).
\end{enumerate}

\bigskip

The classic paper of Shapley and Shubik \cite{Shapley1971assignment} showed that core imputations of the assignment game are precisely optimal solutions to the dual of the LP-relaxation of the game. Building on this, Deng et al. \cite{Deng1999algorithms} gave a general framework which yields analogous characterizations for several fundamental combinatorial games. Interestingly enough, their framework does not apply to the last two games stated above. In turn, we show that some of the core imputations of these games correspond to optimal dual solutions and others do not.  This leads to the tantalizing question of understanding the origins of the latter.  

We also present new characterizations of the profits accrued by agents and teams in core imputations of the first two games. Our  characterization for the first game is stronger than that for the second; the underlying reason is that the characterization of vertices of the Birkhoff polytope is stronger than that of the Balinski polytope.
	\end{abstract}

\begin{comment}

Consider the LP-relaxation of the problem of computing an optimal assignment. On the one hand, the worth of the assignment game is given by the optimal objective function value of this LP, and  on the other, the classic Shapley-Shubik Theorem \cite{Shapley1971assignment} tells us that its  core imputations are precisely optimal solutions to the dual of this LP. These two facts naturally raise the question of viewing core imputations through the lens of complementarity. This leads to a resolution of all our questions.

Whereas the core of the assignment game is always non-empty, that of the general graph matching game can be empty. We next consider the class of general graph matching games having non-empty core and again we study the three questions raised above. The answers obtained are weaker and the reason lies in the difference in the characterizations of the vertices of the Birkhoff and Balinski polytopes.
\end{comment}

\bigskip
\bigskip
\bigskip
\bigskip
\bigskip
\bigskip
\bigskip
\bigskip
\bigskip
\bigskip
\bigskip
\bigskip
\bigskip
\bigskip
\bigskip
\bigskip
\bigskip
\bigskip

\pagebreak

\section{Introduction}
\label{sec.intro}

The matching game forms one of the cornerstones of cooperative game theory and the {\em core} is a  quintessential solution concept in this theory; the latter captures all possible ways of distributing the total worth of a game among individual agents in such a way that the grand coalition remains intact, i.e., a sub-coalition will not be able to generate more profits by itself and therefore has no incentive to secede from the grand coalition. The matching game can also be viewed as a matching market in which utilities of the agents are stated in monetary terms and side payments are allowed, i.e., it is a {\em transferable utility (TU) market}. For an extensive coverage of these notions, see the book by Moulin \cite{Moulin2014cooperative}.    

The classic paper of Shapley and Shubik \cite{Shapley1971assignment} showed that the set of core imputations of the assignment game as the set of optimal solutions to the dual of the LP-relaxation of the maximum weight matching problem in the underlying graph. Among their other insights was a characterization of the two ``antipodal'' points --- imputations which maximally favor one side of the bipartition\footnote{Much like the top and bottom elements in a lattice of stable matchings.} --- in the core of this game. This in-depth understanding makes the assignment game a paradigmatic setting for studying the core; in turn, insights gained provide valuable guidance on profit-sharing in real-life situations. 

Deng et al. \cite{Deng1999algorithms} distilled the ideas underlying the Shapley-Shubik Theorem to obtain a general framework (see Section \ref{sec.Deng}) which helps characterize the core of several games that are based on fundamental combinatorial optimization problems, including maximum flow in unit capacity networks both directed and undirected, maximum number of edge-disjoint $s$-$t$ paths, maximum number of vertex-disjoint $s$-$t$ paths, maximum number of disjoint arborescences rooted at a vertex $r$, and concurrent games (defined below). 

In this paper, we study the core of the assignment game and some of its generalizations, including  two versions of the bipartite $b$-matching game (Section \ref{sec.b-matching-game}); in the first version (Section \ref{sec.b-uncon-core}) edges can be matched multiple number of times and in the second, edges can be matched at most once (Section \ref{sec.b-con-core}). The intriguing aspect of the latter two games is that they don't fall in framework of Deng et al.; see Section \ref{sec.Deng} for the reason. In turn, we show that some of the core imputations of these games correspond to optimal dual solutions and some not. This leads to a tantalizing question: is there a ``mathematical structure'' that produces the latter?

For the assignment game (Section \ref{sec.Complementarity}), we start with the realization is that despite the in-depth work of Shapely and Shubik, and the passage of half a century, there are still basic questions about the core  which have remained unexplored: 

\begin{enumerate}
	\item Do core imputations spread the profit more-or-less evenly or do they restrict them to certain well-chosen agents? If the latter, what characterizes these ``chosen'' agents?
	\item  By definition, under any core imputation, the sum of profits of two agents $i$ and $j$ is at least the profit they make by being matched, say $w_{ij}$. What characterizes pairs $(i, j)$ for which this sum strictly exceed $w_{ij}$?  
	\item How do core imputations behave in the presence of degeneracy?
	\end{enumerate}
	
An assignment game is said to be {\em degenerate} if the optimal assignment is not unique. Although Shapley and Shubik had mentioned this phenomenon, they brushed it away, claiming that ``in the most common case'' the optimal assignment will be unique, and if not, their suggestion  was to perturb the edge weights to make the optimal assignment unique. However, this is far from satisfactory, since perturbing the weights destroys crucial information contained in the original instance and the outcome becomes a function of the vagaries of the randomness imposed on the instance. 

The following broad idea helps answer all three questions. A well-known theorem in matching theory says that the LP-relaxation of the optimal assignment problem always has an integral optimal solution \cite{LP.book}. Therefore, the worth of the assignment game is given by the optimal objective function value of this LP. Next, the Shapley-Shubik Theorem says that the set of core imputations of this game are precisely the optimal solutions to the dual of this LP. These two facts naturally raise the question of viewing core imputations through the lens of complementarity; in turn, it leads to a resolution of all three questions.

The following setting, taken from \cite{Eriksson2001stable} and \cite{Biro2012computing}, vividly captures the issues underlying profit-sharing in an assignment game. Suppose a coed tennis club has sets $U$ and $V$ of women and men players, respectively, who can participate in an upcoming mixed doubles tournament. Assume $|U| = m$ and $|V| = n$, where $m, n$ are arbitrary. Let $G = (U, V, E)$ be a bipartite graph whose vertices are the women and men players and an edge $(i, j)$ represents the fact that agents $i \in U$ and $j \in V$ are eligible to participate as a mixed doubles team in the tournament. Let $w$ be an edge-weight function for $G$, where $w_{i j} > 0$ represents the expected earnings if $i$ and $j$ do participate as a team in the tournament. The total worth of the game is the weight of a maximum weight matching in $G$.

Assume that the club picks such a matching for the tournament. The question is how to distribute the total profit among the agents --- strong players, weak players and unmatched players --- so that no subset of players feel they will be better off seceding and forming their own tennis club. We will use this setting to discuss the issues involved in the questions raised above.

Under core imputations, the profit allocated to an agent is a function of the value he/she brings to the various sub-coalitions he/she belongs to, i.e., it is consistent with his/her negotiating power. Indeed, it is well known that core imputations provide profound insights into the negotiating power of individuals and sub-coalitions; Section \ref{sec.negotiation} illustrates this via some well-chosen examples. The first question provides further insights into this issue. Our answer to this question is that the core rewards only {\em essential} agents, namely those who are matched by {\em every} maximum weight matching, see Theorem \ref{thm.vertices}. 

Our answer to the second question is quite counter-intuitive: we show that a pair of players $(i, j)$ get overpaid by core allocations if and only if they are so incompetent, as a team, that they don't participate in any maximum weight matching! Since $i$ and $j$ are incompetent as a team,   $w_{ij}$ is small. On the other hand, a least one of $i$ and $j$ does team up with other agents in every maximum weight matching -- if not, $(i, j)$ would have been matched. Therefore, the sum of the profits of $i$ and $j$ exceeds $w_{ij}$ in at least one core imputation; this is shown in Theorem \ref{thm.edges}. 

Our insight into degeneracy is that it treats teams and agents in totally different ways, see  Section \ref{sec.degeneracy}. Section \ref{sec.related} discusses past approaches to degeneracy. 

Whereas the core of the assignment game is always non-empty, that of the general graph matching game can be empty. Deng et al. \cite{Deng1999algorithms} showed that the core of this game is non-empty if and only if the weights of maximum weight integral and fractional matchings concur, if so, we call them {\em concurrent games}; their core imputations are also precisely the set of optimal solutions to the dual LP. 

Next, we study the three questions, raised above, for concurrent games, Section \ref{sec.general}.  The answers obtained for the first two questions are weaker than those for the assignment game. The underlying reason is that the characterization of the vertices of the Balinski polytope, Theorem \ref{thm.Balinski}, for concurrent games, is weaker than the characterization of the vertices of the Birkhoff polytope, Theorem \ref{thm.Birkhoff}, for assignment games; in the former, vertices are half-integral matchings and in the latter, they are integral matchings. The answer to third question is identical to that of the assignment game.

%To extend the tennis analogy to the general graph matching game, we will assume that the tennis club has players of one gender only and any two players can form a doubles team, thereby making the underlying graph general rather than bipartite.  

\section{Related Works}
\label{sec.related}

An imputation in the core has to ensure that {\em each} of the exponentially many sub-coalitions is ``happy'' --- clearly, that is a lot of constraints. As a result, the core is non-empty only for  a handful of games, some of which are mentioned in the Introduction. A different kind of game, in which preferences are cardinal, is based on the stable matching problem defined by Gale and Shapley \cite{GaleS}. The only coalitions that matter in this game are ones formed by one agent from each side of the bipartition. A stable matching ensures that no such coalition has the incentive to secede and the set of such matchings constitute the core of this game.   

To deal with games having an empty core, e.g., the general graph matching game, the following two notions have been given in the past. The first is that of {\em least core}, defined by Mascher et al. \cite{Leastcore-Maschler1979geometric}. If the core is empty, there will necessarily be sets $S \subseteq V$ such that $v(S) < p(S)$ for any imputation $v$. The least core maximizes the minimum of $v(S) - p(S)$ over all sets $S \subseteq V$, subject to $v(\emptyset) = 0$ and $v(V) = p(V)$. This involves solving an LP with exponentially many constraints, though, if a separation oracle can be implemented in polynomial time, then the ellipsoid algorithm will accomplish this in polynomial time \cite{GLS}; see below for a resolution for the case of the matching game.  

A more well known notion is that of {\em nucleolus} which is contained in the least core.  After maximizing the minimum of $v(S) - p(S)$ over all sets $S \subseteq V$, it does the same for all remaining sets and so on. A formal definition is given below.

\begin{definition}
	\label{def.nucleolus}
	For an imputation $v: {V} \rightarrow \cR_+$, let $\theta(v)$ be the vector obtained by sorting the $2^{|V|} - 2$ values $v(S) - p(S)$ for each $\emptyset \subset S \subset V$ in non-decreasing order. Then the unique imputation, $v$, that lexicographically maximizes $\theta(v)$ is called the {\em nucleolus} and is denoted $\nu(G)$. 
\end{definition}

The nucleolus was defined in 1969 by Schmeidler \cite{Schmeidler1969nucleolus}, though its history can be traced back to the Babylonian Talmud \cite{Talmud-Aumann1985game}. It has several modern-day applications, e.g., \cite{Branzei2005strongly}. In 1998, \cite{Faigle1998nucleon} stated the problem of computing the nucleolus of the matching game in polynomial time. For the assignment game with unit weight edges, this was done in \cite{Nucleolus-Assign-Solymosi1994algorithm}; however, since the assignment game has a non-empty core, this result was of little value. For the general graph matching game with unit weight edges, this was done by Kern and Paulusma \cite{Kern2003matching}. Finally, the general problem was resolved by Konemann et al. \cite{Konemann2020computing}. However, their algorithm makes extensive use of the ellipsoid algorithm and is therefore neither efficient nor does it give deep insights into the underlying combinatorial structure. They leave the open problem of finding a combinatorial polynomial time algorithm. We note that the difference $v(S) - p(S)$ appearing in the least core and nucleolus has not been upper-bounded for any standard family of games, including the general graph matching game. 

A different notion was recently proposed in \cite{Va.general}, namely {\em approximate core}. That  paper gives an imputation in the $2/3$-approximate core for the general graph matching game, i.e., the total profit allocated to a sub-coalition is at least $2/3$ factor of the profit which it can generate by seceding. Moreover, this imputation can be computed in polynomial time and the bound is best possible, since it is the integrality gap of the natural underlying LP. This work used methodology developed in field of approximation algorithms, e.g., see \cite{ApproximationAlgs}, which uses multiplicative approximation as a norm, and yields polynomial time algorithms. 

Building on the work of \cite{Va.general}, Xiao et al. gave an efficient algorithm for obtaining a 2/3-approximate core allocation for $b$-matching games in general graphs \cite{b-matching-approximate}. We note that this game was called {\em multiple partners matching game} by Sotomayor \cite{Sotomayor1992multiple} and the bipartite $b$-matching game was called the {\em multiple partners assignment game}; however that paper did not study their core. 

Konemann et al. \cite{b-matching-Konemann} showed that computing the nucleolus of the constrained bipartite $b$-matching game is NP-hard even for the case $b=3$ for all vertices. Biro et al. \cite{Biro2012computing} showed that the core non-emptiness and core membership problems for the multiple partners matching game are solvable in polynomial time if $b \leq 2$ and are co-NP-hard even for $b = 3$.

%%%%%%%%%%%%

Over the years, researchers have approached the phenomenon of degeneracy in the assignment game from directions that are different from ours. Nunez and Rafels \cite{Nunez-Dimension}, studied relationships between degeneracy and the dimension of the core. They defined an agent to be {\em active} if her profit is not constant across the various imputations in the core, and non-active otherwise. Clearly, this notion has much to do with the dimension of the core, e.g., it is easy to see that if all agents are non-active, the core must be zero-dimensional. They prove that if all agents are active, then the core is full dimensional if and only if the game is non-degenerate. Furthermore, if there are exactly two optimal matchings, then the core can have any dimension between 1 and $m-1$, where $m$ is the smaller of $|U|$ and $|V|$; clearly, $m$ is an upper bound on the dimension.

In another work, Chambers and Echenique \cite{Chambers2015core} study the following question: Given the entire set of optimal matchings of a game on $m = |U|$, $n = |V|$ agents, is there an $m \times n$ surplus matrix which has this set of optimal matchings. They give necessary and sufficient conditions for the existence of such a matrix.

\section{The Core of the Assignment Game}
\label{sec.Complementarity}

In this section, we provide answers to the three questions, for assignment games, which were 
 raised in the Introduction.

\subsection{Definitions and Preliminary Facts}
\label{sec.matching-game}

The {\em assignment game}, $G = (U, V, E), \ w: E \rightarrow \cR_+$, has been defined in the Introduction. We start by giving definitions needed to state the Shapley-Shubik Theorem. 

\begin{definition}
	\label{sec.coalition}
		The set of all players, $U \cup V$, is called the {\em grand coalition}. A subset of the players, $(S_u \cup S_v)$, with $S_u \subseteq U$ and $S_v \subseteq V$, is called a {\em coalition} or a {\em sub-coalition}.
\end{definition}

\begin{definition}
	\label{def.worth}
	The {\em worth} of a coalition $(S_u \cup S_v)$ is defined to be the maximum profit that can be generated by teams within $(S_u \cup S_v)$ and is denoted by $p(S_u \cup S_v)$. Formally, $p(S_u \cup S_v)$ is the weight of a maximum weight matching in the graph $G$ restricted to vertices in $(S_u \cup S_v)$ only. $p(U \cup V)$ is called the {\em worth of the game}. The {\em characteristic function} of the game is defined to be $p: 2^{U \cup V} \rightarrow \cR_+$.   
\end{definition}

\begin{definition}
	\label{def.imputation}	
	An {\em imputation}\footnote{Some authors prefer to call this a pre-imputation, while using the term imputation when individual rationality is also satisfied.} gives a way of dividing the worth of the game, $p(U \cup V)$, among the agents. It consists of two functions $u: {U} \rightarrow \cR_+$ and $v: {V} \rightarrow \cR_+$ such that $\sum_{i \in U} {u(i)} + \sum_{j \in V} {v(j)} = p(U \cup V)$. 
\end{definition}
	
\begin{definition}
	\label{def.core}
	An imputation $(u, v)$ is said to be in the {\em core of the assignment game} if for any coalition $(S_u \cup S_v)$, the total worth allocated to agents in the coalition is at least as large as the worth that they can generate by themselves, i.e., $\sum_{i \in S_u} {u(i)} +  \sum_{j \in S_v} {v(j)} \geq p(S)$.
\end{definition}
 
We next describe the characterization of the core of the assignment game given by Shapley and Shubik \cite{Shapley1971assignment}\footnote{Shapley and Shubik had described this game in the context of the housing market in which agents are of two types, buyers and sellers. They had shown that each imputation in the core of this game gives rise to unique prices for all the houses. In this paper we will present the assignment game in a variant of the tennis setting given in the Introduction; this will obviate the need to define ``prices'', hence leading to simplicity.}. 

As stated in Definition \ref{def.worth}, the worth of the game, $G = (U, V, E), \ w: E \rightarrow \cR_+$, is the weight of a maximum weight matching in $G$. Linear program (\ref{eq.core-primal-bipartite}) gives the LP-relaxation of the problem of finding such a matching. In this program, variable $x_{ij}$ indicates the extent to which edge $(i, j)$ is picked in the solution. Matching theory tells us that this LP always has an integral optimal solution \cite{LP.book}; the latter is a maximum weight matching in $G$.

	\begin{maxi}
		{} {\sum_{(i, j) \in E}  {w_{ij} x_{ij}}}
			{\label{eq.core-primal-bipartite}}
		{}
		\addConstraint{\sum_{(i, j) \in E} {x_{ij}}}{\leq 1 \quad}{\forall i \in U}
		\addConstraint{\sum_{(i, j) \in E} {x_{ij}}}{\leq 1 }{\forall j \in V}
		\addConstraint{x_{ij}}{\geq 0}{\forall (i, j) \in E}
	\end{maxi}

Taking $u_i$ and $v_j$ to be the dual variables for the first and second constraints of (\ref{eq.core-primal-bipartite}), we obtain the dual LP: 

 	\begin{mini}
		{} {\sum_{i \in U}  {u_{i}} + \sum_{j \in V} {v_j}} 
			{\label{eq.core-dual-bipartite}}
		{}
		\addConstraint{ u_i + v_j}{ \geq w_{ij} \quad }{\forall (i, j) \in E}
		\addConstraint{u_{i}}{\geq 0}{\forall i \in U}
		\addConstraint{v_{j}}{\geq 0}{\forall j \in V}
	\end{mini}

\begin{theorem}
	\label{thm.SS}
	(Shapley and Shubik \cite{Shapley1971assignment})
	The imputation $(u, v)$ is in the core of the assignment game if and only if it is an optimal solution to the dual LP, (\ref{eq.core-dual-bipartite}). 
\end{theorem}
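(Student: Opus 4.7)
The plan is to prove both directions separately, using LP duality together with the integrality of the matching polytope as the two main tools.

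For the forward direction, suppose $(u,v)$ is in the core. I would first verify dual feasibility. Nonnegativity $u_i, v_j \ge 0$ follows from the imputation definition. For the edge constraint $u_i + v_j \ge w_{ij}$, I would apply the core inequality to the two-player coalition $S = \{i\} \cup \{j\}$ for any $(i,j) \in E$; here $p(S) = w_{ij}$, giving exactly the required inequality. To see optimality, I would combine two facts: (i) by definition of imputation, $\sum_i u_i + \sum_j v_j = p(U \cup V)$; (ii) by integrality of LP (\ref{eq.core-primal-bipartite}), the primal optimum equals $p(U \cup V)$. Thus the feasible dual solution $(u,v)$ attains the primal optimum and is therefore optimal by weak duality.

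For the backward direction, suppose $(u,v)$ is an optimal dual solution. I first check that it is an imputation: nonnegativity is built into (\ref{eq.core-dual-bipartite}), and by strong LP duality the dual optimum equals the primal optimum, which equals $p(U \cup V)$ by integrality. It remains to verify the core condition. Fix a coalition $S = S_u \cup S_v$ and let $M^*$ be a maximum weight matching in $G$ restricted to $S$, so that $p(S) = \sum_{(i,j) \in M^*} w_{ij}$. Since $(u,v)$ satisfies $u_i + v_j \ge w_{ij}$ for every edge, and since all $u_i, v_j \ge 0$, I would write
\[
\sum_{i \in S_u} u_i + \sum_{j \in S_v} v_j \;\ge\; \sum_{(i,j) \in M^*} (u_i + v_j) \;\ge\; \sum_{(i,j) \in M^*} w_{ij} \;=\; p(S),
\]
where the first inequality drops the nonnegative contributions from vertices of $S$ unmatched by $M^*$. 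This is exactly the core inequality.

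I do not anticipate a serious obstacle: the proof is essentially a direct application of LP duality once integrality of the bipartite matching LP is invoked. The one subtlety worth being explicit about is that the imputation equality $\sum u_i + \sum v_j = p(U \cup V)$ plays two different roles in the two directions, connecting to weak duality on one side and to strong duality on the other; keeping these uses straight is the only real bookkeeping required.
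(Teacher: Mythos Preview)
Your proposal is correct and follows essentially the same approach the paper uses (see the proof of Theorem~\ref{thm.b-uniform} together with Remark~\ref{rem.simpler}, which specialize to the assignment game). The only cosmetic difference is that for the core inequality you bound $p(S)$ directly via an explicit maximum matching $M^*$ in the induced subgraph, whereas the paper phrases the same step as ``the restriction of $(u,v)$ to $G'$ is a feasible dual for $G'$'' and then invokes weak LP duality plus integrality; these are the same argument unpacked to different depths.
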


By Theorem \ref{thm.SS}, the core of the assignment game is a convex polyhedron. Shapley and Shubik shed further light on the structure of the core by showing that it has two special imputations which are furtherest apart and so can be thought of as antipodal imputations. In the tennis club setup, one of these imputations maximizes the earnings of women players and the other maximizes the earnings of men players. These are illustrated in Examples \ref{ex.3} and \ref{ex.4}. 

Finally, we state a fundamental fact about LP (\ref{eq.core-primal-bipartite}); its corollary will be used in a crucial way in Theorems \ref{thm.vertices} and \ref{thm.edges}.

\begin{theorem}
	\label{thm.Birkhoff}
	(Birkhoff \cite{Birkhoff1946three})	The vertices of the polytope defined by the constraints of LP (\ref{eq.core-primal-bipartite}) are $0/1$ vectors, i.e., they are matchings in $G$.
\end{theorem}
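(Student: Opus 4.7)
The plan is to prove the statement by contradiction: I will assume that some vertex $x$ of the polytope in LP (\ref{eq.core-primal-bipartite}) has a fractional coordinate and exhibit a nonzero direction $d$ such that both $x + \epsilon d$ and $x - \epsilon d$ lie in the polytope for small $\epsilon > 0$. Since $x = \tfrac{1}{2}(x+\epsilon d) + \tfrac{1}{2}(x - \epsilon d)$, this contradicts $x$ being a vertex, forcing $x \in \{0,1\}^E$. Each $0/1$ feasible vector clearly satisfies the degree-at-most-one constraints at every vertex and is thus a matching, which gives the claim.

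First I would isolate the fractional support $F := \{e \in E : 0 < x_e < 1\}$ and consider the subgraph $H = (U \cup V, F)$. The analysis splits into two cases according to whether $H$ contains a cycle. If $H$ has a cycle $C$, then since $G$ is bipartite the cycle has even length; I orient $C$ and take $d$ to alternate $+1, -1, +1, -1, \ldots$ along its edges and $0$ elsewhere. At every vertex of $C$ exactly one $+1$ edge and one $-1$ edge are incident, so the degree sum at each vertex is preserved. For small $\epsilon > 0$, each coordinate of $x \pm \epsilon d$ remains in $(0,1)$ on $C$ and is unchanged off $C$, so both perturbed points are feasible and $x$ is not a vertex.

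If instead $H$ is a forest, I take any maximal path $P = v_0 v_1 \cdots v_k$ in $H$; its endpoints $v_0, v_k$ are leaves of $H$, so each is incident to exactly one fractional edge of $G$. The key observation, and the main step to justify carefully, is that both endpoints have strict slack in their degree constraint: at $v_0$, the single fractional edge $(v_0, v_1)$ contributes some value in $(0,1)$, and every other edge incident to $v_0$ has $x$-value in $\{0,1\}$; if any such integer edge carried value $1$, the degree sum at $v_0$ would exceed $1$, violating feasibility. Hence all integer edges at $v_0$ carry value $0$, and the degree sum at $v_0$ is $x_{v_0 v_1} < 1$. The same holds at $v_k$. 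I then take $d$ to alternate $+1,-1,+1,-1,\ldots$ along the edges of $P$, which leaves the degree sum unchanged at every internal vertex and changes it by $\pm \epsilon$ at $v_0$ and $v_k$; the slack at the endpoints absorbs this. Coordinates stay in $(0,1)$ for sufficiently small $\epsilon > 0$, giving the desired perturbation.

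The main obstacle is exactly the endpoint-slack argument in the forest case, since the cycle case relies only on bipartiteness and is essentially immediate. Once the leaf-endpoint reasoning is in hand, both cases furnish a direction $d$ witnessing that $x$ is a convex combination of two distinct feasible points, completing the contradiction and establishing that every vertex of the polytope is a $0/1$ incidence vector of a matching in $G$.
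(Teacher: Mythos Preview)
Your argument is correct and is the standard alternating-cycle/alternating-path proof that the fractional bipartite matching polytope has integral vertices. The endpoint-slack step in the forest case is handled properly: maximality of the path in the fractional subgraph $H$ forces each endpoint to have degree~$1$ in $H$ (any extra neighbor would either extend the path or create a cycle, both impossible), and then the degree constraint rules out any integral edge of value~$1$ at that endpoint, leaving strict slack.

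As for comparison: the paper does not supply its own proof of this theorem. It is quoted as a classical result of Birkhoff \cite{Birkhoff1946three} and used as a black box, with Corollary~\ref{cor.Birkhoff} (that every fractional matching is a convex combination of integral matchings) being the consequence actually invoked in Theorems~\ref{thm.vertices} and~\ref{thm.edges}. So there is nothing in the paper to compare your argument against; your write-up simply fills in a proof the paper chose to cite rather than reproduce.
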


\begin{corollary}
	\label{cor.Birkhoff}
	Any fractional matching in a bipartite graph is a convex combination of integral matchings. 
\end{corollary}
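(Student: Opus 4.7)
The plan is to derive the corollary as an immediate consequence of Theorem \ref{thm.Birkhoff} combined with the standard representation theorem for bounded polyhedra, which states that every point of a bounded polyhedron is a convex combination of its vertices (extreme points). This is a completely textbook argument; the corollary is really just a restatement of Theorem \ref{thm.Birkhoff} in convex-combination language.

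First I would observe that, by definition, a fractional matching in $G = (U, V, E)$ is precisely a feasible solution $x \in \R^E$ of LP (\ref{eq.core-primal-bipartite}); call the feasible polyhedron $P$. Next I would note that $P$ is in fact a bounded polytope: each coordinate satisfies $0 \leq x_{ij} \leq 1$, the upper bound following from either of the two degree constraints $\sum_{(i,j) \in E} x_{ij} \leq 1$ at the endpoints of the edge $(i,j)$ together with non-negativity of the other terms in that sum. Hence $P$ is contained in the unit cube $[0,1]^E$ and is bounded.

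Having established boundedness, I would invoke the Minkowski--Krein--Milman theorem for polytopes: every point in a bounded polyhedron is a convex combination of its (finitely many) vertices. Applied to an arbitrary fractional matching $x \in P$, this gives a representation
\[
x \;=\; \sum_{k=1}^t \lambda_k \, v^{(k)}, \qquad \lambda_k \geq 0, \quad \sum_{k=1}^t \lambda_k = 1,
\]
where $v^{(1)}, \ldots, v^{(t)}$ are vertices of $P$. By Theorem \ref{thm.Birkhoff}, each $v^{(k)}$ is a $0/1$ vector in $\R^E$, and any $0/1$ vector satisfying the constraints of (\ref{eq.core-primal-bipartite}) is exactly the indicator vector of a matching in $G$. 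Thus $x$ is a convex combination of integral matchings, as claimed.

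There is no real obstacle here. The only point that deserves a moment's care is verifying boundedness of $P$ (so that the vertex-representation theorem applies and no extreme rays need to be accounted for), and this is immediate from the degree constraints. Everything else is a direct quotation of Theorem \ref{thm.Birkhoff}.
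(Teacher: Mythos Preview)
Your proposal is correct and is exactly the intended argument: the paper states the corollary without proof, treating it as an immediate consequence of Theorem~\ref{thm.Birkhoff} via the standard fact that every point of a bounded polytope is a convex combination of its vertices. Your write-up simply makes that implicit step explicit, including the (routine) verification of boundedness.
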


\subsection{The first question: Allocations made to agents by core imputations}
\label{sec.vertices}

\begin{definition}
	\label{def.player}
	A generic player in $U \cup V$ will be denoted by $q$. We will say that $q$ is:
	\begin{enumerate}
		\item {\em essential} if $q$ is matched in every maximum weight matching in $G$.
		\item {\em viable} if there is a maximum weight matching $M$ such that $q$ is matched in $M$ and another, $M'$ such that $q$ is not matched in $M'$. 	
		\item {\em subpar} if for every maximum weight matching $M$ in $G$, $q$ is not matched in $M$. 	
		\end{enumerate}
\end{definition}

\begin{definition}
\label{def.player-paid}
	Let $y$ be an imputation in the core. We will say that $q$ {\em gets paid in $y$} if $y_q > 0$ and {\em does not get paid} otherwise. Furthermore, $q$ is {\em paid sometimes} if there is at least one imputation in the core under which $q$ gets paid, and it is {\em never paid} if it is not paid under every imputation. 
\end{definition}

\begin{theorem}
	\label{thm.vertices}
	 For every player $q \in (U \cup V)$: 
		\[ q \ \mbox{is paid sometimes}  \ \iff \ q \ \mbox{is essential} \]  
\end{theorem}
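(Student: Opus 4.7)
The plan is to interpret the theorem through the LP-duality lens established by Theorem \ref{thm.SS}, which identifies core imputations with optimal solutions to the dual LP (\ref{eq.core-dual-bipartite}). So ``$q$ is paid sometimes'' becomes ``the dual variable at $q$ is strictly positive in at least one dual optimum,'' and the characterization of essential players is exactly the kind of statement that complementary slackness is designed to handle. The forward direction falls out of standard complementary slackness, while the reverse direction uses \emph{strict} complementary slackness, bridged to maximum weight matchings by Corollary \ref{cor.Birkhoff}.

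For the direction ``paid sometimes $\Rightarrow$ essential,'' assume $(u,v)$ is a core imputation with $y_q > 0$, where $y_q$ denotes $u_q$ if $q \in U$ and $v_q$ if $q \in V$. Let $M$ be any maximum weight matching of $G$; by Theorem \ref{thm.Birkhoff}, its characteristic vector $x^M$ is primal-optimal for LP (\ref{eq.core-primal-bipartite}). Complementary slackness applied to the primal constraint at $q$, against the strictly positive dual variable $y_q$, forces that constraint to be tight under $x^M$; hence $q$ is matched by $M$. Since $M$ was arbitrary, $q$ is essential.

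For the direction ``essential $\Rightarrow$ paid sometimes,'' suppose every maximum weight matching saturates $q$. First I would upgrade this to the statement that every (possibly fractional) primal optimum $x^*$ of LP (\ref{eq.core-primal-bipartite}) satisfies $\sum_{j : (q,j) \in E} x^*_{qj} = 1$. This follows from Corollary \ref{cor.Birkhoff} applied to the optimal face of the matching polytope: that face is itself an integral polytope whose vertices are maximum weight matchings, so any $x^*$ on it is a convex combination of such matchings, each of which saturates $q$. Then I would invoke the strict complementary slackness theorem, which supplies an optimal primal-dual pair $(x^*, (u^*, v^*))$ such that at every vertex either the dual variable is strictly positive or the primal constraint has positive slack. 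At $q$, slack is impossible by the previous step, so $y_q^*$ must be strictly positive; by Theorem \ref{thm.SS}, $(u^*, v^*)$ is then a core imputation paying $q$.

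The main obstacle is the bridge from ``$q$ is matched in every integral optimum'' to ``the primal constraint at $q$ is tight in every fractional optimum,'' because without this upgrade the strict complementary slackness pair might still have positive slack at $q$ and the argument would collapse. Corollary \ref{cor.Birkhoff}, together with the observation that the optimal face inherits integrality of its vertices from Theorem \ref{thm.Birkhoff}, is precisely what rules this out, after which the rest of the proof is essentially mechanical.
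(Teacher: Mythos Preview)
Your proposal is correct and follows essentially the same route as the paper: identify core imputations with dual optima via Theorem~\ref{thm.SS}, use complementary slackness for the forward direction, and for the reverse direction use Corollary~\ref{cor.Birkhoff} to upgrade ``saturated in every integral optimum'' to ``saturated in every fractional optimum'' before invoking strict complementary slackness. The only cosmetic difference is that in the forward direction you work directly with integral optima (maximum weight matchings) rather than all primal optima, which is if anything slightly cleaner.
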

	
\begin{proof}
The proof follows by applying complementary slackness conditions and strict complementarity to the primal LP (\ref{eq.core-primal-bipartite}) and dual LP (\ref{eq.core-dual-bipartite}); see \cite{Sch-book} for formal statements of these facts. By Theorem \ref{thm.SS}, talking about imputations in the core of the assignment game is equivalent to talking about optimal solutions to the dual LP.

 Let $x$ and $y$ be optimal solutions to LP (\ref{eq.core-primal-bipartite}) and LP (\ref{eq.core-dual-bipartite}), respectively. By the Complementary Slackness Theorem, for each $q \in (U \cup V): \ y_q \cdot (x(\delta(q)) - 1) = 0$. 

$(\Rightarrow)$  Suppose $q$ is paid sometimes. Then, there is an optimal solution to the dual LP, say $y$, such that $y_q > 0$. By the Complementary Slackness Theorem, for any optimal solution, $x$, to LP (\ref{eq.core-primal-bipartite}), $x(\delta(q)) = 1$, i.e., $q$ is matched in $x$. Varying $x$ over all optimal primal solutions, we get that $q$ is always matched. In particular, $q$ is matched in all optimal assignments, i.e., integral optimal primal solutions, and is therefore essential. This proves the forward direction.

$(\Leftarrow)$ Strict complementarity implies that corresponding to each player $q$, there is a pair of optimal primal and dual solutions, say $x$ and $y$, such that either $y_q = 0$ or $x(\delta(q)) = 1$ but not both. Assume that $q$ is essential, i.e., it is matched in every integral optimal primal solution. 
 
 We will use Corollary \ref{cor.Birkhoff}, which implies that every fractional optimal primal solution to LP (\ref{eq.core-primal-bipartite}) is a convex combination of integral optimal primal solutions. Therefore $q$ is fully matched in every optimal solution, $x$, to LP (\ref{eq.core-primal-bipartite}), i.e., $x(\delta(q)) = 1$. Therefore there must be an optimal dual solution $y$ such that $y_q > 0$. Hence $q$ is paid sometimes, proving the reverse direction. 
\end{proof}

Theorem \ref{thm.vertices} is equivalent to the following. For every player $q \in (U \cup V)$: 
		\[ q \ \mbox{is never paid} \ \iff \ q \ \mbox{is not essential} \]  
		
Thus core imputations pay only essential players and each of them is paid in some core imputation.  Since we have assumed that the weight of each edge is positive, so is the worth of the game, and all of it goes to essential players. Hence we get:

\begin{corollary}
	\label{cor.vertices}
	In the assignment game, the set of essential players is non-empty and in every core imputation, the entire worth of the game is distributed among essential players; moreover, each of them is paid in some core imputation. 
\end{corollary}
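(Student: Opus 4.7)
The corollary is essentially a repackaging of Theorem~\ref{thm.vertices} together with a positivity observation, so the plan is to deduce each of the three assertions directly from that theorem, keeping the exposition short.

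First I would unpack the three claims separately: (i) the set of essential players is nonempty, (ii) under any core imputation every cent of $p(U \cup V)$ is paid to essential players, and (iii) every essential player receives a positive payment in at least one core imputation. Item (iii) is nothing more than the forward implication of Theorem~\ref{thm.vertices} applied to each essential $q$, so no further work is needed once Theorem~\ref{thm.vertices} is established.

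For item (ii), the plan is to take the contrapositive of the $(\Rightarrow)$ direction of Theorem~\ref{thm.vertices}: if $q$ is not essential, then $q$ is never paid, i.e.\ $y_q = 0$ in every core imputation $y$. Since $\sum_{q \in U \cup V} y_q = p(U \cup V)$ by the definition of an imputation (Definition~\ref{def.imputation}) together with the dual feasibility of $y$ guaranteed by Theorem~\ref{thm.SS}, the entire sum is supported on essential players.

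For item (i), the key observation I would make is that the edge weights are assumed positive, so any edge $(i,j) \in E$ already gives a feasible matching of weight $w_{ij} > 0$, hence $p(U \cup V) > 0$. Consequently in any core imputation $(u,v)$ at least one coordinate must be strictly positive, and by the $(\Rightarrow)$ direction of Theorem~\ref{thm.vertices} that coordinate corresponds to an essential player. (A completely degenerate instance with $E = \emptyset$ would have to be excluded as a trivial special case, but the standing assumption $w_{ij} > 0$ and the implicit assumption that $G$ has at least one edge rule this out.)

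There is no real obstacle here: everything reduces to citing Theorem~\ref{thm.vertices} in both directions. The only small subtlety worth flagging is the non-emptiness step, where one must remember to invoke the positivity of edge weights to conclude that the worth of the game is strictly positive and therefore some player is paid in every core imputation, which then forces the existence of essential players via Theorem~\ref{thm.vertices}.
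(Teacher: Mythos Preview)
Your proposal is correct and follows essentially the same route as the paper, which deduces the corollary directly from Theorem~\ref{thm.vertices} together with the observation that positive edge weights force $p(U \cup V) > 0$. One minor slip: in item~(iii) you invoke the ``forward implication'' of Theorem~\ref{thm.vertices}, but ``essential $\Rightarrow$ paid sometimes'' is the $(\Leftarrow)$ direction; the argument itself is unaffected.
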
 

\begin{remark}
Theorem \ref{thm.SS} and Corollary \ref{cor.vertices} are of much consequence. 
\begin{enumerate}
	\item Corollary \ref{cor.vertices} reveals the following surprising fact: the set of players who are allocated profits in a core imputation is {\em independent} of the set of teams that play. 
	\item The identification of these players, and the exact manner in which the total profit is divided among them, follows the negotiating process described on Section \ref{sec.negotiation}, in which each player ascertains his/her negotiating power based on all possible sub-coalitions he/she participates in. In turn, this process identifies                                                              agents who play in {\em all} possible maximum weight matchings. 
	\item Perhaps the most remarkable aspect of Theorem \ref{thm.SS} is that each possible outcome of this very real process is captured by an inanimate object, namely an optimal solution to the dual, LP (\ref{eq.core-dual-bipartite}). 
\end{enumerate}
\end{remark}

By Corollary \ref{cor.vertices}, core imputations reward only essential players. This raises the following question: Can't a non-essential player, say $q$, team up with another player, say $p$, and secede, by promising $p$ almost all of the resulting profit? The answer is ``No'', because the dual (\ref{eq.core-dual-bipartite}) has the constraint $y_q + y_p \geq w_{qp}$. Therefore, if $y_q = 0$, $y_p \geq w_{q p}$, i.e., $p$ will not gain by seceding together with $q$.

\subsection{The second question: Allocations made to teams by core imputations}
\label{sec.edges}

\begin{definition}
	\label{def.team}
	By a {\em mixed doubles team} we mean an edge in $G$; a generic one will be denoted as $e = (u, v)$. We will say that $e$ is:
	\begin{enumerate}
		\item {\em essential} if $e$ is matched in every maximum weight matching in $G$.
		\item {\em viable} if there is a maximum weight matching $M$ such that $e \in M$, and another, $M'$ such that $e \notin M'$. 
		\item {\em subpar} if for every maximum weight matching $M$ in $G$, $e \notin M$. 
	\end{enumerate}
	\end{definition}
	
\begin{definition}
\label{def.team-paid}
	 Let $y$ be an imputation in the core of the game. We will say that $e$ is {\em fairly paid in $y$} if $y_u + y_v = w_e$ and it is {\em overpaid} if $y_u + y_v > w_e$\footnote{Observe that by the first constraint of the dual LP (\ref{eq.core-dual-bipartite}), these are the only possibilities.}. Finally, we will say that $e$ is {\em always paid fairly} if it is fairly paid in every imputation in the core.
\end{definition}

\begin{theorem}
	\label{thm.edges}
	 For every team $e \in E$: 
		\[ e \ \mbox{is always paid fairly} \ \iff \ e \ \mbox{is viable or essential} \]
\end{theorem}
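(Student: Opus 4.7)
The plan is to mimic the argument of Theorem \ref{thm.vertices}, but apply complementary slackness and strict complementarity to the edge constraints of LP (\ref{eq.core-primal-bipartite})--LP (\ref{eq.core-dual-bipartite}) instead of to the vertex constraints. Via Theorem \ref{thm.SS}, core imputations are exactly optimal duals $y$, and for any optimal primal $x$ and dual $y$ the relevant complementary slackness condition on an edge $e = (i,j)$ reads $x_{ij} \cdot (y_i + y_j - w_{ij}) = 0$. I will prove the contrapositive of the theorem, namely: $e$ is overpaid in some core imputation $\iff$ $e$ is subpar.

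For the $(\Leftarrow)$ direction, assume $e = (i,j)$ is viable or essential, so that $e$ lies in some maximum weight matching $M$. Then the characteristic vector $x^M$ is an integral optimal primal solution with $x^M_e = 1 > 0$. Complementary slackness then forces $y_i + y_j = w_e$ for \emph{every} optimal dual $y$, i.e., $e$ is always paid fairly. This direction is routine.

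For the $(\Rightarrow)$ direction, assume $e$ is subpar, so $x^M_e = 0$ for every maximum weight integral matching $M$. Here is where Corollary \ref{cor.Birkhoff} enters in a crucial way: every optimal fractional solution of LP (\ref{eq.core-primal-bipartite}) is a convex combination of integral optima, so $x_e = 0$ in every optimal primal solution whatsoever. Now invoke strict complementarity: there exists a pair $(x^*, y^*)$ of optimal primal and dual solutions such that for the edge $e$, exactly one of the two inequalities $x_e \geq 0$ and $y_i + y_j \geq w_e$ is tight. Since the first must be tight (by the previous sentence), the second must be strict, giving $y^*_i + y^*_j > w_e$. Thus $e$ is overpaid in the core imputation $y^*$, as required.

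The only step that needs real care is the use of Corollary \ref{cor.Birkhoff} to upgrade ``not matched in any integral optimum'' to ``$x_e = 0$ in every fractional optimum''; without this upgrade, strict complementarity could produce a primal with fractional mass on $e$ and no overpaid dual. This is precisely the analogue of the step in Theorem \ref{thm.vertices} that depended on integrality of the vertex-matching polytope, and it is the reason the cleaner characterization for assignment games does not extend to concurrent games in Section \ref{sec.general}, where the Balinski polytope only yields half-integral vertices.
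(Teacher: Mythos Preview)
Your proposal is correct and follows essentially the same route as the paper: both directions rest on complementary slackness for the edge constraints, with strict complementarity and Corollary~\ref{cor.Birkhoff} supplying the nontrivial implication. The only cosmetic difference is that the paper proves the forward direction directly (always paid fairly $\Rightarrow$ some optimal primal has $x_e>0$ $\Rightarrow$ via Birkhoff some integral optimum matches $e$), whereas you argue the contrapositive (subpar $\Rightarrow$ via Birkhoff $x_e=0$ in every optimal primal $\Rightarrow$ some optimal dual overpays $e$); the ingredients and their order of use are the same.
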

	
\begin{proof}
The proof is similar to that of Theorem \ref{thm.vertices}. Let $x$ and $y$ be optimal solutions to LP (\ref{eq.core-primal-bipartite}) and LP (\ref{eq.core-dual-bipartite}), respectively. By the Complementary Slackness Theorem, for each $e = (u, v) \in E: \ \ x_e \cdot (y_u + y_v - w_e) = 0$.

$(\Leftarrow)$ To prove the reverse direction, suppose $e$ is viable or essential. Then there is an optimal solution to the primal, say $x$, under which it is matched. Therefore,  $x_e > 0$. Let $y$ be an arbitrary optimal dual solution. Then, by the Complementary Slackness Theorem, $y_u + y_v = w_e$, i.e., $e$ is fairly paid in $y$. Varying $y$ over all optimal dual solutions, we get that $e$ is always paid fairly. 

$(\Rightarrow)$ To prove the forward direction, we will use strict complementarity. It implies that corresponding to each team $e$, there is a pair of optimal primal and dual solutions $x$ and $y$ such that either $x_e = 0$ or $y_u + y_v = w_e$ but not both. 

Assume that team $e$ is always fairly paid, i.e., under every optimal dual solution $y$, $y_u + y_v = w_e$. By strict complementarity, there must be an optimal primal solution $x$ for which $x_e > 0$. Now corollary \ref{cor.Birkhoff} implies that $x$ is a convex combination of optimal assignments. Therefore, there must be an optimal assignment in which $e$ is matched. Therefore $e$ is viable or essential and the forward direction also holds. 
\end{proof}

Negating both sides of the implication proved in Theorem \ref{thm.edges}, we get the following  implication. For every team $e \in E$: 
		\[ e \ \mbox{is subpar} \ \iff \ e \ \mbox{is sometimes overpaid} \]

Clearly, this statement is equivalent to the statement proved Theorem \ref{thm.edges} and hence contains no new information. However, it provides a new viewpoint. These two equivalent  statements yield the following assertion, which at first sight seems incongruous with what we desire from the notion of the core and the just manner in which it allocates profits:

\begin{center}
{\em Whereas viable and essential teams are always paid fairly, subpar teams are sometimes overpaid.}
\end{center}

How can the core favor subpar teams over viable and essential teams? An explanation is provided in the Introduction, namely a subpar team $(i, j)$ gets overpaid because $i$ and $j$ create worth by playing in competent teams with other players. 

Finally, we observe that contrary to Corollary \ref{cor.vertices}, which says that the set of essential players is non-empty, the set of essential teams may be empty, as is the case in Examples \ref{ex.1} and \ref{ex.2} in Section \ref{sec.negotiation}.

%{\bf 3).} Clearly the worth of the game is generated by teams that do play. Assume that $(u, v)$ is such a team in an optimal assignment. Since $x_{uv} > 0$, by complementary slackness we get that $y_u + y_v = w_{uv}$, where $y$ is a core imputation. Thus core imputations distribute the worth generated by a team among its players only. In contrast, the ${2 \over 3}$-approximate core imputation for the general graph matching game given in \cite{Va.general} distributes the worth generated by teams which play to non-playing agents as well, thereby making a more thorough use of the TU aspect.   

\subsection{The third question: Degeneracy}
\label{sec.degeneracy}

 Next we use Theorems \ref{thm.vertices} and \ref{thm.edges} to get insights into degeneracy. Clearly, if an assignment game is non-degenerate, then every team and every player is either always matched or always unmatched in the set of maximum weight matchings in $G$, i.e., there are no viable teams or players. Since viable teams and players arise due to degeneracy, in order to understand the phenomenon of degeneracy, we need to understand how viable teams and players behave with respect to core imputations; this is done in the next corollary.
 
\begin{corollary}
	\label{cor.degen}
	In the presence of degeneracy, imputations in the core of an assignment game treat:
	\begin{itemize}
			\item  viable players in the same way as subpar players, namely they are never paid.  
		\item viable teams in the same way as essential teams, namely they are always fairly paid. 
	\end{itemize}
\end{corollary}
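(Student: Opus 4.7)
The plan is to derive both bullets as immediate consequences of Theorems \ref{thm.vertices} and \ref{thm.edges}, using the fact that the three categories in Definition \ref{def.player} (essential, viable, subpar) partition the players, and similarly for teams in Definition \ref{def.team}.

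For the first bullet, I would take the contrapositive of Theorem \ref{thm.vertices}, which gives: for every player $q$, $q$ is never paid if and only if $q$ is not essential. Since the categories essential/viable/subpar partition $U \cup V$, a player fails to be essential precisely when it is viable or subpar. Hence both viable and subpar players are never paid, and this is the only behavior core imputations exhibit toward them. I would emphasize that this is exactly the statement of the first bullet.

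For the second bullet, Theorem \ref{thm.edges} directly states that a team $e$ is always paid fairly if and only if $e$ is viable or essential. Thus, with no further argument, viable teams are always paid fairly, exactly as essential teams are. I would simply cite the theorem.

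Since both conclusions are direct specializations of the prior theorems, there is no real obstacle to carrying this out; the only care needed is to observe that the three definitional categories form a partition (which is immediate from Definitions \ref{def.player} and \ref{def.team}) so that ``not essential'' coincides with ``viable or subpar.'' The corollary thus functions as a conceptual repackaging: it isolates the behavior specifically on viable agents and teams, which are precisely the artifacts of degeneracy, and highlights the asymmetry -- viable players pattern with subpar players (never paid), whereas viable teams pattern with essential teams (always fairly paid).
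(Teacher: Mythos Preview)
Your proposal is correct and matches the paper's approach exactly: the paper presents this corollary without a separate proof, merely noting that it follows from Theorems~\ref{thm.vertices} and~\ref{thm.edges}, which is precisely the derivation you spell out (contrapositive of Theorem~\ref{thm.vertices} for players, direct application of Theorem~\ref{thm.edges} for teams).
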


\section{Insights Provided by the Core into Negotiating Power of Agents}
\label{sec.negotiation}

%%%%%%%%%%%%%%%%%%%%%%%%%%%%%%%%%%%%%%%%%%%%%%%%%%%%%%%%%%%

\begin{figure}[h]
\begin{center}
\includegraphics[width=2.4in]{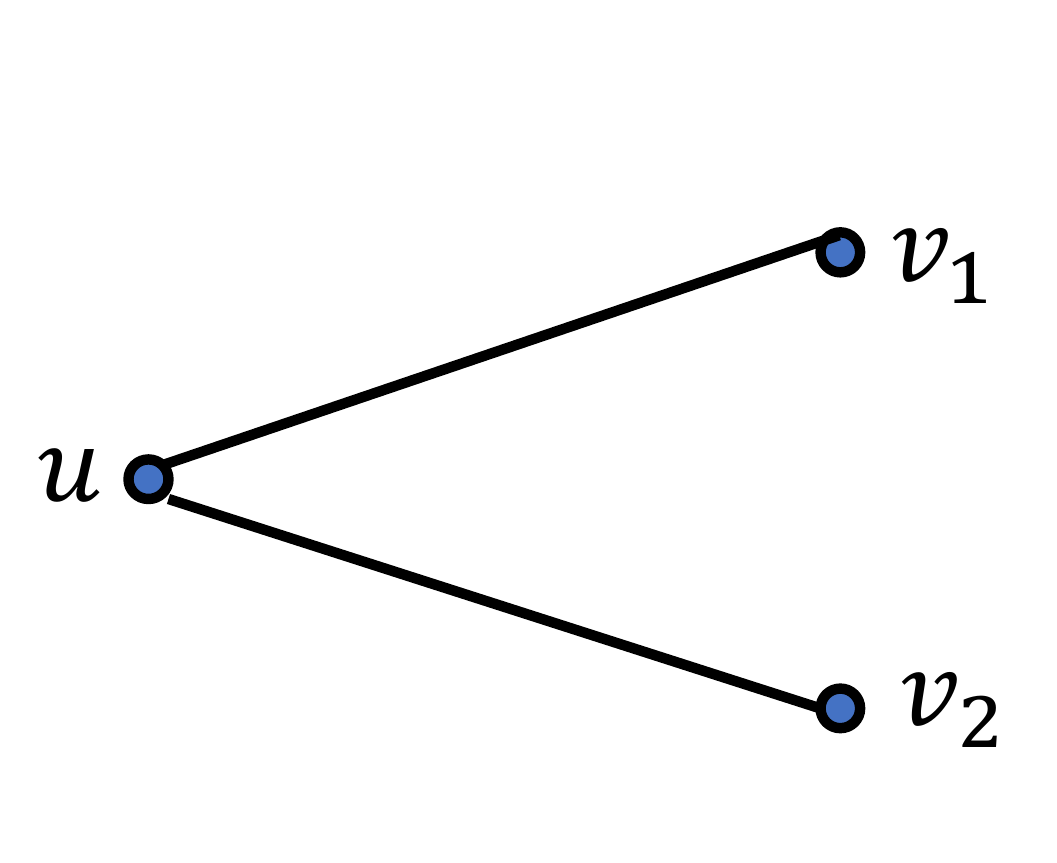}
\caption{The graph for Example \ref{ex.1}}
\label{fig.1}
\end{center}
\end{figure}

%%%%%%%%%%%%%%%%%%%%%%%%%%%%%%%%%%%%%%%%%%%%%%%%%%%%%%%%%%%%%%%

\begin{example}
	\label{ex.1}
	Consider an assignment game whose bipartite graph, on the three agents $u, v_1, v_2$, and two edges, is given in Figure \ref{fig.1}. Clearly, one of $v_1$ and $v_2$ will be left out in any matching. First assume that the weight of both edges is 1. If so, the unique imputation in the core gives zero to $v_1$ and $v_2$, and 1 to $u$. Next assume that the weights of the edges $(u, v_1)$ and $(u, v_2)$ are 1 and $1 + \epsilon$ respectively, for a small $\epsilon > 0$. If so, the unique imputation in the core gives $0, \epsilon$ and $1$ to $v_1$, $v_2$ and $u$, respectively.  
\end{example}

How fair are the imputations given in Example \ref{ex.1}? As stated in the Introduction, imputations in the core have a lot to do with the negotiating power of individuals and sub-coalitions. Let us argue that when the imputations given above are viewed from this angle, they are fair in that the profit allocated to an agent is consistent with their negotiating power, i.e., their worth. In the first case, whereas $u$ has alternatives, $v_1$ and $v_2$ don't. As a result, $u$ will squeeze out all profits from whoever she plays with, by threatening to partner with the other player. Therefore $v_1$ and $v_2$ have to be content with no rewards! In the second case, $u$ can always threaten to match up with $v_2$. Therefore $v_1$ has to be content with a profit of $\epsilon$ only.

\begin{example}
	\label{ex.2}
	 Consider an assignment game whose bipartite graph, shown in Figure \ref{fig.2}, has four edges, $(u_1, v_1), (u_1, v_2)$, $(u_2, v_2), (u_2, v_3)$ on the five agents $u_1, u_2, v_1, v_2, v_3$. Let the wights of these four edges be $1, 1.1, 1.1$ and $1$, respectively. This game has two different maximum weight matchings, each of weight 2.1, which is also the worth of this game. 
\end{example}

%%%%%%%%%%%%%%%%%%%%%%%%%%%%%%%%%%%%%%%%%%%%%%%%%%%%%%%%%%%

\begin{figure}[h]
\begin{center}
\includegraphics[width=2.4in]{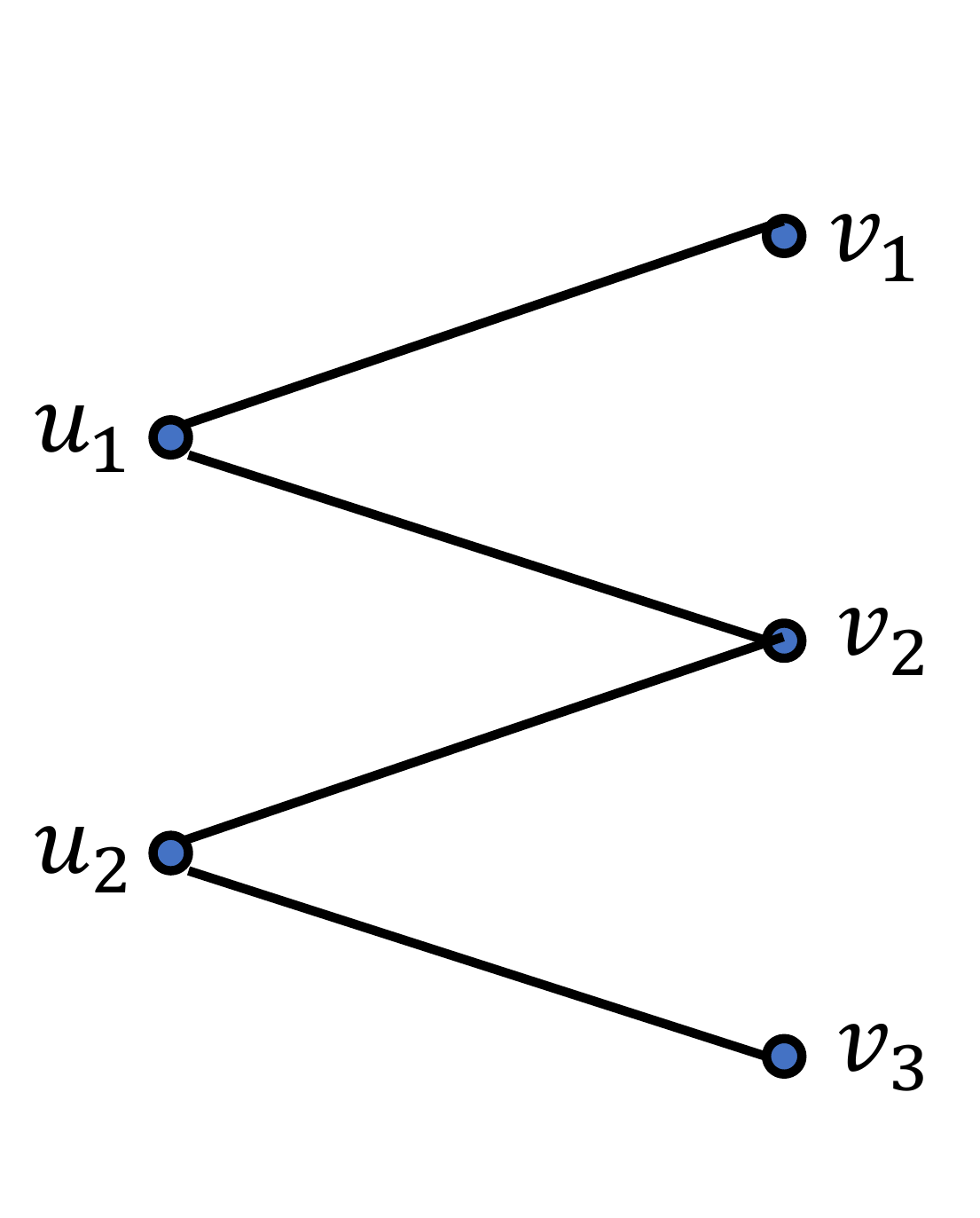}
\caption{The graph for Example \ref{ex.2}.}
\label{fig.2}
\end{center}
\end{figure}

%%%%%%%%%%%%%%%%%%%%%%%%%%%%%%%%%%%%%%%%%%%%%%%%%%%%%%%%%%%%%%%

 In Example \ref{ex.2}, at first sight, $v_2$ looks like the dominant player, since he has two choices of partners, namely $u_1$ and $u_2$, and because teams involving him have the biggest earnings, namely 1.1 as opposed to 1. Yet, the unique core imputation in the core awards $1, 1, 0, 0.1, 0$ to agents $u_1, u_2, v_1, v_2, v_3$, respectively. 
 
 The question arises: ``Why is $v_2$ allocated only $0.1$? Assume that the maximum weight matching chosen is $\{(u_1, v_2), (u_2, v_3) \}$. If $v_2$ tries to negotiate a higher profit than $0.1$ from $u_1$, she will threaten to play with $v_1$, who will play for essentially nothing, since he was left out of the tournament so far. As a result, $u_1$ is guaranteed a profit of essentially 1 and so $v_2$ has to be content with $0.1$. A similar reasoning applies if the second maximum weight matching is chosen. Hence the core imputation has indeed allocated profits according to the negotiating power of each agent. 
 
%%%%%%%%%%%%%%%%%%%%%%%%%%%%%%%%%%%%%%%%%%%%%%%%%%

\begin{example}
	\label{ex.3}
	The game of Figure \ref{fig.3} has five players. Assume that the weights of edges $(u_1, v_1)$, $(u_1, v_2)$ and $(u_2, v_1)$ are 1 each and the weights of $(u_2, v_2)$ and $(u_3, v_2)$ are $0.4$ and $0.9$, respectively. The maximum weight matching picks edges $(u_1, v_2)$ and  $(u_2, v_1)$, having total weight of 2. 
	
	Player $u_1$ is highly competitive and has alternatives, yet the two antipodal points in the core assign to $u_1, u_2, u_3, v_1, v_2$ profits of $0.1, 0.1, 0, 0.9, 0.9$ and $0, 0, 0, 1, 1$ in the woman-optimal and man-optimal imputations, respectively. Observe that $u_1$ is assigned profits of 0 and $0.1$ only in these two imputations. Why can't she negotiate a higher profit from $v_2$? The reason is that $v_2$ has a guaranteed alternative in $u_3$, who is available and with whom he can make a profit of essentially $0.9$. 
\end{example}

%%%%%%%%%%%%%%%%%%%%%%%%%%%%%%%%%%%%%%%%%%%%%%%%%%%%%%%%%%%

\begin{figure}[h]
\begin{center}
\includegraphics[width=2.4in]{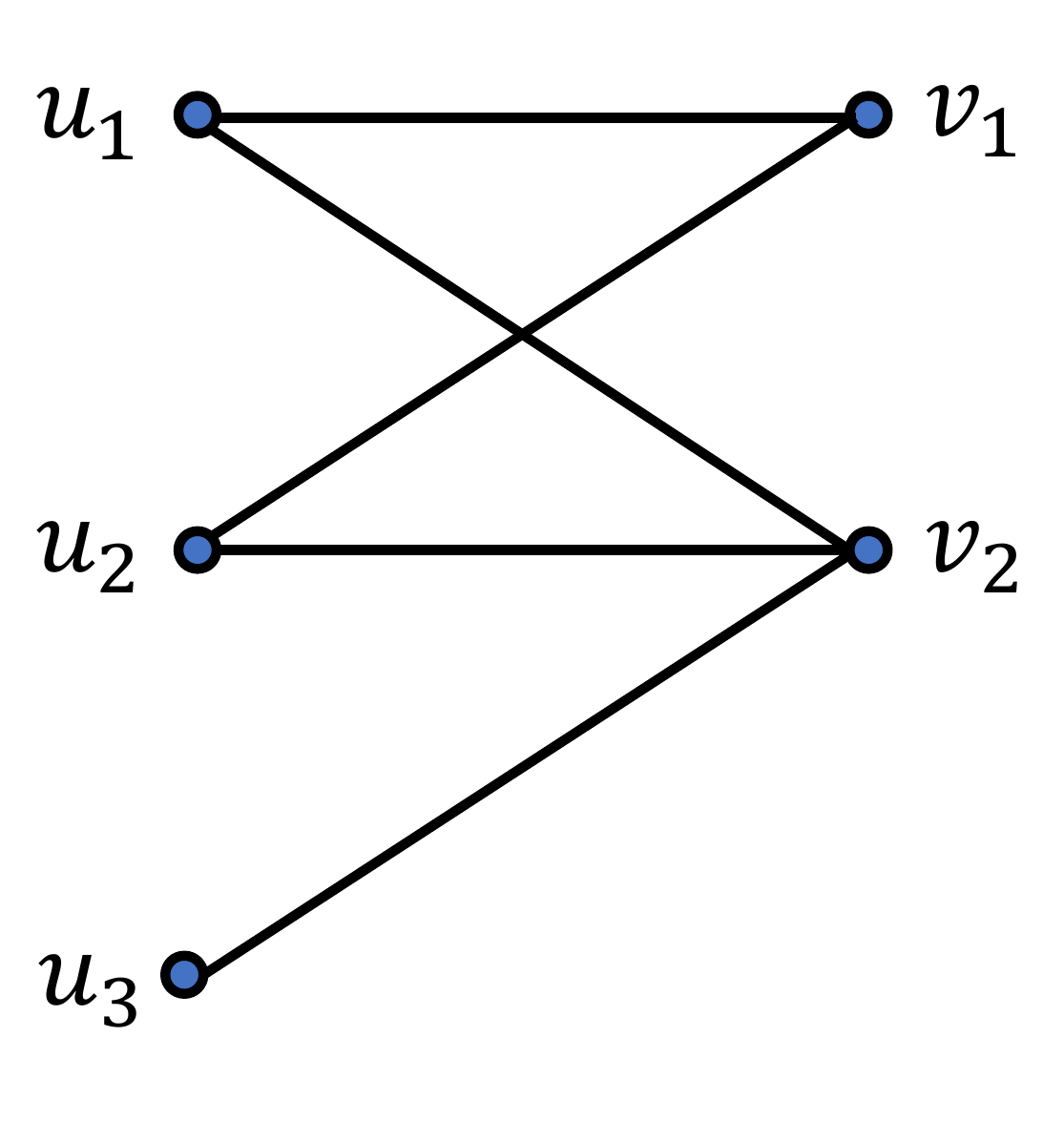}
\caption{The graph for Example \ref{ex.3}.}
\label{fig.3}
\end{center}
\end{figure}

%%%%%%%%%%%%%%%%%%%%%%%%%%%%%%%%%%%%%%%%%%%%%%%%%%%%%%%%%%%%%%%

 \begin{example}
 	\label{ex.4}
 	Consider the assignment game whose bipartite graph, in eight vertices, is shown in Figure \ref{fig.4}. Assume that the weights of $(u_1, v_1)$ and $(u_2, v_2)$ are 100 each, the weights of $(u_1, v_3)$ and $(u_2, v_4)$ are 51 each, and the weights of $(u_3, v_2)$ and $(u_4, v_1)$ are 50 each. Clearly, the worth of the game is 202 and is given by matching the last four edges. There are two antipodal imputations in the core. The woman-optimal imputation gives 51 to each of $u_1$ and $u_2$, 50 to each of $v_1$ and $v_2$, and zero to the rest. The man-optimal imputation gives 50 to each of $u_1$, $u_2$, $v_1$ and $v_2$, 1 to each of $v_3$ and $v_4$, and zero to $u_3$ and $u_4$. 
 \end{example}

%%%%%%%%%%%%%%%%%%%%%%%%%%%%%%%%%%%%%%%%%%%%%%%%%%%%%%%%%%%

\begin{figure}[h]
\begin{center}
\includegraphics[width=2.4in]{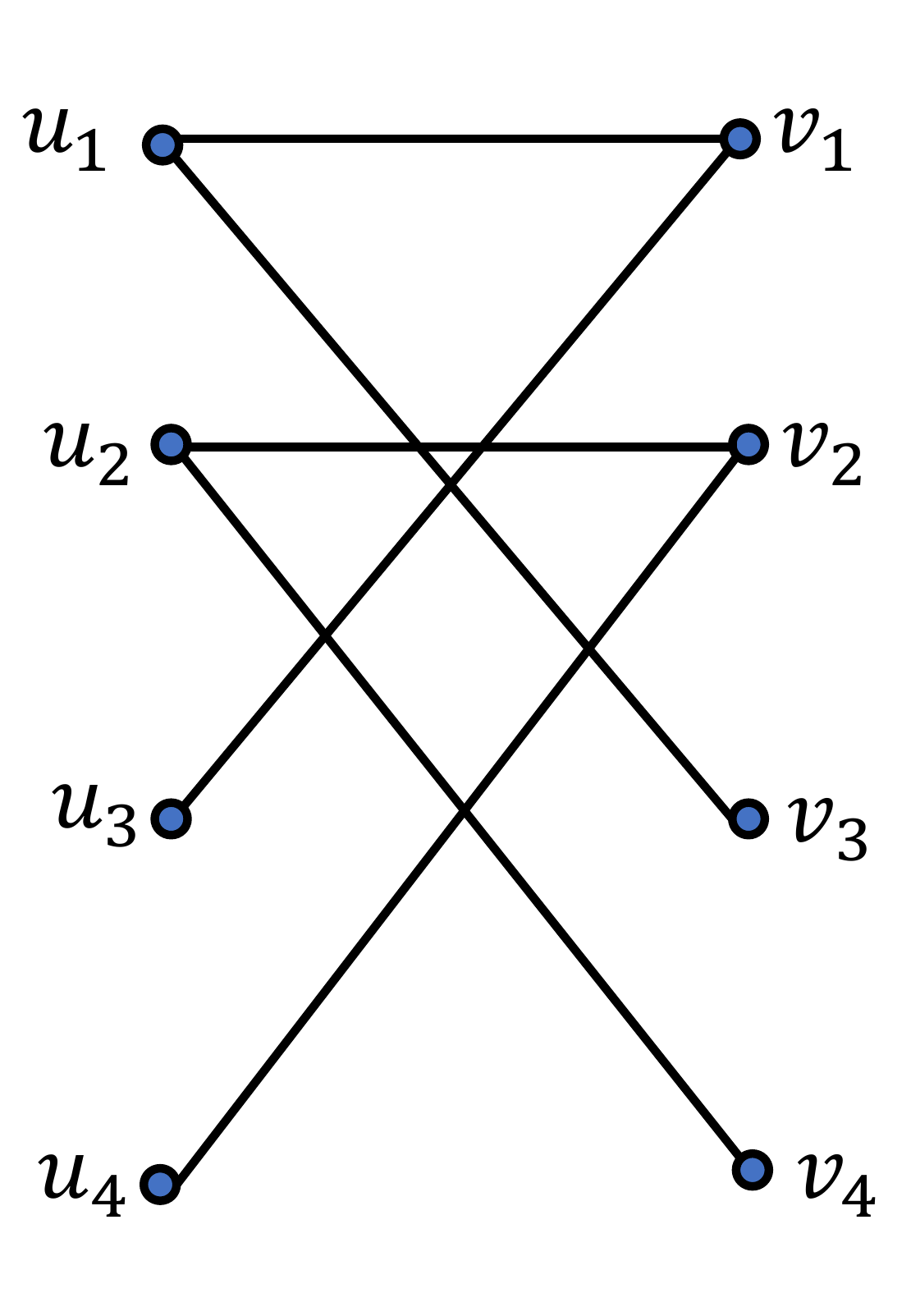}
\caption{The graph for Example \ref{ex.4}.}
\label{fig.4}
\end{center}
\end{figure}

%%%%%%%%%%%%%%%%%%%%%%%%%%%%%%%%%%%%%%%%%%%%%%%%%%%%%%%%%%%%%%%

Let us partition the players into two sets: the {\em bottom players} consisting of $u_3, u_4, v_3$ and $v_4$, and the {\em top players} consisting of the rest. The question is why do the bottom  players get so little profit as compared to the top players? In particular, the bottom players  get zero profit in the woman-optimal imputation and a total of only 2 in the man-optimal imputation. 

The reason is that the top players can generate a worth of 200 on their own, via the teams $(u_1, v_1)$ and $(u_2, v_2)$. This gives them the power to negotiate a total profit of at least 200, leaving a profit of at most 2 for the first four players. Indeed, the core imputations must respect this, to prevent the top players from seceding.

\section{The Core of Concurrent Games}  
\label{sec.general}

In Section \ref{sec.gen-matching-game}, we define the general graph matching game. Once again, a good way of introducing this game is in the context of a tennis club that needs to enter doubles teams in a tournament; however, this time the club has players of one gender only, and so any two players can form a doubles team and the underlying graph is, in general, non-bipartite.   
 
 Deng et al. \cite{Deng1999algorithms} showed that the core of this game is non-empty if and only if the weights of maximum weight integral and fractional matchings concur; if so, we will say that the game is {\em concurrent}. In Section \ref{sec.gen-matching-game}, we give the LP-relaxation of the problem of finding a maximum weight fractional matching and thereby provide the underlying reason for the above-stated result of \cite{Deng1999algorithms} as well as their characterization of the core of concurrent games.

\subsection{Definitions and Preliminary Facts} 
\label{sec.gen-matching-game}

\begin{definition}
	\label{def.matching-game}
	The {\em general graph matching game} consists of an undirected graph $G = (V, E)$ and an edge-weight function $w$. The vertices $i \in V$ are tennis players and an edge $(i, j) \in E$ represents the fact that players $i$ and $j$ are happy to form a doubles team; if so, $w_{i j}$ represents the profit generated by this team.  
\end{definition}     

\begin{definition}
	\label{def.worth}
	The {\em worth} of a coalition $S \subseteq V$ is defined to be the maximum profit that can be generated by teams within $S$ and is denoted by $p(S)$. Formally, $p(S)$ is the weight of a maximum weight matching in the graph $G$ restricted to vertices in $S$ only. The {\em worth of the game} is defined to be $p(V)$, i.e., the worth of the {\em grand coalition, $V$}.  The {\em characteristic function} of the game is defined to be $p: 2^{V} \rightarrow \cR_+$.   
\end{definition}

\begin{definition}
	\label{def.imputation}	
	An {\em imputation}\footnote{Some authors prefer to call this a pre-imputation, while using the term imputation when individual rationality is also satisfied.} gives a way of dividing the worth of the game, $p(V)$, among the agents. Formally, it is a function $v: {V} \rightarrow \cR_+$ such that $\sum_{i \in V} {v(i)} = p(V)$. 
\end{definition}
	
\begin{definition}
	\label{def.core}
	An imputation $v$ is said to be in the {\em core of the matching game} if for any coalition $S \subseteq V$, the total worth allocated to agents in $S$ is at least as large as the worth that they can generate by themselves, i.e., $v(S) \geq p(S)$, where $v(S) = \sum_{i \in S} {v(i)}$.
\end{definition}

We will work with the following LP (\ref{eq.core-primal}), whose optimal solutions are maximum weight fractional matchings in $G$.

	\begin{maxi}
		{} {\sum_{(i, j) \in E}  {w_{ij} x_{ij}}}
			{\label{eq.core-primal}}
		{}
		\addConstraint{\sum_{(i, j) \in E} {x_{ij}}}{\leq 1 \quad}{\forall i \in V}
		\addConstraint{x_{ij}}{\geq 0}{\forall (i, j) \in E}
	\end{maxi}
	
Note that in case $G$ is bipartite, LP (\ref{eq.core-primal}) is equivalent to LP (\ref{eq.core-primal-bipartite}). Therefore, by Theorem \ref{thm.Birkhoff}, it always has an integral optimal solution. On the other hand, if $G$ is non-bipartite, LP (\ref{eq.core-primal}) may have no integral optimal solutions, e.g., a triangle with unit weight edges. However, by Theorem \ref{thm.Balinski}, this LP always has a half-integral optimal solution.

\begin{theorem}
\label{thm.Balinski}  (Balinski \cite{Balinski1965integer}) 
	For a general graph, the vertices of the polytope defined by the constraints of LP (\ref{eq.core-primal}) are half-integral, such that the edges set to 1 form a matching and those set to half form disjoint odd-length cycles. 
\end{theorem}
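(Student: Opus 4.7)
The plan is to use the standard extreme-point criterion: a point $x$ is a vertex of the polytope $P$ defined by LP~(\ref{eq.core-primal}) iff there is no nonzero $d \in \R^E$ with $x \pm \epsilon d \in P$ for all sufficiently small $\epsilon > 0$. Equivalently, no nonzero $d$ satisfies (i)~$d_e = 0$ whenever $x_e = 0$, and (ii)~$\sum_{e \ni i} d_e = 0$ at every \emph{tight} vertex $i$ (where $\sum_{e \ni i} x_e = 1$). Fixing a vertex $x$, the strategy is to assume $x$ lacks the claimed structure and exhibit such a $d$ to derive a contradiction.

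First I would handle edges with $x_e = 1$: any such $e = (i,j)$ forces, via the vertex constraints at $i$ and $j$, that both endpoints are tight and incident to no other edge with positive $x$-value, so these edges form a matching whose endpoints are isolated from the remaining positive support. Then I would study the subgraph $H = (V_H, F)$ on edges with $0 < x_e < 1$. For each connected component $C$ of $H$ with vertex set $V_C$, edge set $F_C$, and tight-vertex set $T_C$, a valid $d$ supported on $F_C$ must lie in the kernel of the ``tight incidence submatrix'' (rows indexed by $T_C$, columns by $F_C$). A uniform rank-count shows this kernel is trivial precisely when $C$ is an odd cycle with every vertex tight. Otherwise one of the following yields a nonzero kernel vector: (a)~$C$ is bipartite and contains a cycle, so its full incidence matrix has rank $\leq |V_C| - 1 < |F_C|$; (b)~$C$ is a tree, whose $\geq 2$ leaves are non-tight because $x_{e_\ell} < 1$ on every leaf-edge, giving $|T_C| \leq |V_C| - 2 < |F_C|$; (c)~$C$ is non-bipartite with cyclomatic number $\geq 2$, so $|F_C| \geq |V_C| + 1 > |V_C| \geq |T_C|$; (d)~$C$ is non-bipartite and unicyclic with some non-tight vertex, so $|T_C| \leq |V_C| - 1 < |V_C| = |F_C|$.

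Having thus forced each component of $H$ to be an odd cycle $v_1 v_2 \cdots v_{2m+1} v_1$ with every vertex tight, the constraints $x_{e_{i-1}} + x_{e_i} = 1$ propagate around the cycle to give $x_{e_i} = x_{e_1}$ for odd $i$ and $x_{e_i} = 1 - x_{e_1}$ for even $i$; the closing equation at $v_1$ then forces $2 x_{e_1} = 1$, so $x_e = \tfrac12$ on every cycle edge. Combined with the matching from the first step, this is exactly the theorem. The main obstacle is the ``lollipop'' sub-case --- a tree branch attached to an odd cycle --- where an explicit perturbation requires alternating $\pm 1$ along the pendant and then wrapping around the cycle with magnitude $\tfrac12$ so the junction constraint closes, exploiting the odd length of the cycle. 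The rank-counting argument above bypasses this explicit construction by handling all sub-cases uniformly.
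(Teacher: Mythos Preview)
The paper does not prove this theorem; it is simply quoted as Balinski's result with a citation to \cite{Balinski1965integer}, so there is no in-paper argument to compare against. Your proof is correct and follows the standard route. Two small points worth making explicit in a polished write-up: first, the reason a leaf $\ell$ of a tree component of $H$ is genuinely non-tight in $G$ (not just in $H$) is that your first step already showed no vertex of $H$ is incident to an edge with $x_e=1$, so the degree-sum at $\ell$ in $G$ equals the single fractional value $x_{e_\ell}<1$; second, your cases (a)--(d) are exhaustive because a non-bipartite unicyclic component with \emph{all} vertices tight can have no pendant (any leaf would be non-tight by the same observation), hence must coincide with its unique odd cycle. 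With those two remarks spelled out, the rank-counting disposes of every bad component uniformly, and the alternating propagation around the surviving odd cycles is immediate.
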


Taking $v_i$ to be dual variables for the first constraint of (\ref{eq.core-primal}), we obtain  LP (\ref{eq.core-dual}). Any feasible solution to this LP is called a {\em cover} of $G$, since for each edge $(i, j)$, $v_i$ and $v_j$ cover edge $(i, j)$ in the sense that $v_i + v_j \geq w_{ij}$. An optimal solution to this LP is a {\em minimum cover}. We will say that $v_i$ is the {\em profit} of player $i$.

 	\begin{mini}
		{} {\sum_{i \in V}  {v_{i}}} 
			{\label{eq.core-dual}}
		{}
		\addConstraint{v_i + v_j}{ \geq w_{ij} \quad }{\forall (i, j) \in E}
		\addConstraint{v_{i}}{\geq 0}{\forall i \in V}
	\end{mini}
	
Let $Q_f$ ($Q_i$) be the weight of a maximum weight fractional (integral) matching in $G$. Now, $Q_f \geq Q_i$, since every integral matching is also a fractional matching. By the LP Duality Theorem, $Q_f$ equals the total value of a minimum cover. On the other hand, $Q_i$ is the worth of the game.

\cite{Deng1999algorithms} proved that the core of the general graph matching game is non-empty if and only if $Q_f = Q_i$. If so, by a proof that is similar to that of the Shapley-Shubik Theorem, it is easy to see each optimal solution to the dual LP, namely LP (\ref{eq.core-dual}),  gives a way for distributing the worth of the game among agents so that the condition of the core is satisfied, i.e., it is a core imputation. The converse is also true, i.e., every core imputation is an optimal solution to the dual LP. We summarize below.

\begin{theorem}
	\label{thm.Deng}
	(Deng et al. \cite{Deng1999algorithms}) 
	The core of the general graph matching game is non-empty if and only if $Q_f = Q_i$. If so, the set of core imputation is precisely the set of optimal solutions to the dual LP, LP (\ref{eq.core-dual}). 
\end{theorem}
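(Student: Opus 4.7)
The plan is to establish the theorem via LP duality applied both to the full graph $G$ and, crucially, to each induced subgraph $G[S]$. The central observation is that any feasible solution $v$ to LP (\ref{eq.core-dual}) restricts to a feasible cover of every induced subgraph $G[S]$, simply because the constraint $v_i+v_j\geq w_{ij}$ is inherited by every edge of $G[S]$. Hence by LP duality on $G[S]$, the quantity $\sum_{i\in S} v_i$ upper-bounds the weight of any fractional, and therefore integral, matching in $G[S]$. This bridges dual feasibility on the whole graph to the core inequalities $v(S)\geq p(S)$ on all coalitions.

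With this in hand, I would first prove the ``$Q_f=Q_i \Rightarrow$ core non-empty'' direction together with the fact that each optimal dual solution is a core imputation. Let $v$ be optimal for LP (\ref{eq.core-dual}). Strong duality on $G$ gives $\sum_{i\in V}v_i=Q_f$, which equals $Q_i=p(V)$ by hypothesis, so $v$ is an imputation. For any coalition $S$, the restriction argument combined with LP duality on $G[S]$ yields $v(S)\geq Q_f(G[S])\geq Q_i(G[S])=p(S)$, so $v$ lies in the core. Non-negativity of $v$ is part of the definition of imputation, so nothing extra is required.

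For the converse, let $v$ be any imputation in the core. Applied to the two-vertex coalition $\{i,j\}$ for each edge $(i,j)\in E$, the core inequality reads $v_i+v_j\geq p(\{i,j\})=w_{ij}$, so $v$ is feasible for LP (\ref{eq.core-dual}). Its objective value is $\sum_i v_i=p(V)=Q_i$, hence by weak duality $Q_f\leq Q_i$; together with the trivial $Q_f\geq Q_i$ (every integral matching is fractional), this forces $Q_f=Q_i$, and simultaneously certifies that $v$ achieves the dual optimum. Combining both directions yields the stated equivalence and the characterization of the core as the set of optimal dual solutions.

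The step that requires the most care is the restriction argument in the first half: one must verify that ``LP duality on $G[S]$'' is meaningful, which it is because LP (\ref{eq.core-primal}) and LP (\ref{eq.core-dual}) are defined for arbitrary graphs and strong LP duality holds unconditionally. Note that the proof relies on the fractional relaxation rather than the integral matching problem, precisely because only the fractional problem admits strong duality; the gap between $Q_f$ and $Q_i$ for non-bipartite graphs is exactly what makes the hypothesis $Q_f=Q_i$ necessary and not merely sufficient.
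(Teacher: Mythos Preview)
Your proof is correct and follows essentially the same approach that the paper sketches: the paper does not give a detailed proof of this theorem (it attributes the result to Deng et al.\ and says the argument ``is similar to that of the Shapley--Shubik Theorem''), but the ingredients it outlines---restricting an optimal dual to $G[S]$ to verify the core inequalities, and using two-vertex coalitions $\{i,j\}$ to certify dual feasibility of any core imputation---are exactly the ones you deploy. Your weak-duality step cleanly derives $Q_f\le Q_i$ from the existence of a core imputation, closing both the equivalence and the characterization in one stroke.
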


\begin{example}
	\label{ex.5}
	Consider the graph given in Figure \ref{fig.5}. Assume that the weight of edge $(v_2, v_7)$ is 2 and the weights of the rest of the edges is 1. This graph has three maximum weight integral matching, of weight 4, and edge $(v_2, v_7)$ is in each of them; in addition, the three matchings contain the edge sets $\{(v_1, v_6), (v_4, v_5) \}, \{(v_1, v_6), (v_3, v_4) \}$ and $\{(v_3, v_4), (v_5, v_6) \}$. This graph also has three fractional matchings having a weight of 4, which are not integral: the seven-cycle $v_1, v_2, v_7, v_3, v_4, v_5, v_6$ taken half-integrally; the three-cycle $v_1, v_2, v_7$ taken half-integrally together with the edges $\{(v_3, v_4), (v_5, v_6)\}$; and the three-cycle $v_2, v_3, v_7$ taken half-integrally together with the edges $\{(v_4, v_5), (v_1, v_6)\}$. 
	
	By Theorem \ref{thm.Deng}, this game has a non-empty core. The unique core imputation assigns a profit of 1 to each of $v_2, v_4, v_6, v_7$ and zero to the rest. Observe that if the weight of edge $(v_2, v_7)$ is decreased a bit, then the core will become empty and if it is increased a bit, then the maximum weight fractional matchings will all be integral and the core will remain non-empty. 
\end{example}

%%%%%%%%%%%%%%%%%%%%%%%%%%%%%%%%%%%%%%%%%%%%%%%%%%%%%%%%%%%

\begin{figure}[h]
\begin{center}
\includegraphics[width=2.8in]{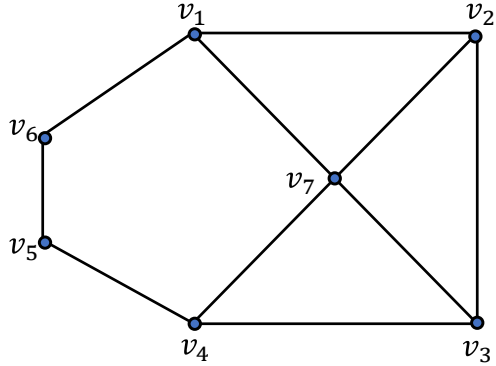}
\caption{The graph for Example \ref{ex.5}.}
\label{fig.5}
\end{center}
\end{figure}

%%%%%%%%%%%%%%%%%%%%%%%%%%%%%%%%%%%%%%%%%%%%%%%%%%%%%%%%%%%%%%%

%%%%%%%%%%%%%%%%%%

For completeness, we describe a different LP for general graphs, namely LP (\ref{eq.core-primal-general}), given by Edmonds \cite{Edmonds.matching}, which always has integral optimal solutions; these are maximum weight matchings in the graph. This LP is an enhancement of LP (\ref{eq.core-primal}) via odd set constraints, as specified in (\ref{eq.core-primal-general}). The latter constraints are exponential in number, namely one for every odd subset $S$ of vertices. Clearly, the total number of integral matched edges in this set can be at most ${{(|S|-1)} \over 2}$. The constraint imposes this bound on any fractional matching as well, thereby ensuring that a fractional matching that picks each edge of an odd cycle half-integrally is disallowed and there is always an integral optimal matching. 

	\begin{maxi}
		{} {\sum_{(i, j) \in E}  {w_{ij} x_{ij}}}
			{\label{eq.core-primal-general}}
		{}
		\addConstraint{\sum_{(i, j) \in E} {x_{ij}}}{\leq 1 \quad}{\forall i \in V}
		\addConstraint{\sum_{(i, j) \in S} {x_{ij}}}{\leq {{(|S|-1)} \over 2} \quad}{\forall S \subseteq V, \ S \ \mbox{odd}}
		\addConstraint{x_{ij}}{\geq 0}{\forall (i, j) \in E}
	\end{maxi}

The dual of this LP has, in addition to variables corresponding to vertices, $v_i$, exponentially many more variables corresponding to odd sets, $z_S$, as given in (\ref{eq.core-dual-general}). As a result, the entire worth of the game does not reside on vertices only --- it also resides on odd sets.

 	\begin{mini}
		{} {\sum_{i \in V}  {v_{i}} + \sum_{S \subseteq V, \ \mbox{odd}} {z_S}} 
			{\label{eq.core-dual-general}}
		{}
		\addConstraint{v_i + v_j + \sum_{S \, \ni \, i, j}{z_S}}{ \geq w_{ij} \quad }{\forall (i, j) \in E}
		\addConstraint{v_{i}}{\geq 0}{\forall i \in V}
		\addConstraint{z_{S}}{\geq 0}{\forall S \subseteq V, \ S \ \mbox{odd}}
	\end{mini}

There is no natural way of dividing $z_S$ among the vertices in $S$ to restore core properties. The situation is more serious than that: it turns out that in general, the core of a non-bipartite game may be empty. 

This is easy to see for the graph $K_3$, i.e., a clique on three vertices, $i, j, k$, with a weight of 1 on each edge. Any maximum matching in $K_3$ has only one edge, and therefore the worth of this game is 1. Suppose there is an imputation $v$ which lies in the core. Consider all three two-agent coalitions. Then, we must have:
\[ v(i) + v(j) \geq 1, \ \ \ \  v(j) + v(k) \geq 1 \ \ \ \ \mbox{and} \ \ \ \ v(i) + v(k) \geq 1 .\]
This implies $v(i) + v(j) + v(k) \geq 3/2$ which exceeds the worth of the game, giving a contradiction.

One recourse to this eventuality was provided in \cite{Va.general}, which gives a ${1 \over 3}$-approximate core imputation for such games, i.e., the weight of a maximum weight matching in the graph is distributed among vertices in such a way that the total profit accrued by agents in a sub-coalition $S \subseteq V$ is at least ${1 \over 3}$ fraction of the profit which $S$ can generate by itself.

\subsection{Answers to the Three Questions for Concurrent Games}
\label{sec.Gen-Insights}

In this section, we will assume that $G = (V, E)$, $w: E \rightarrow \QQ_+$ is a concurrent game, and we will provide answers to the three issues raised in the Introduction. We first provide the necessary definitions, which are adaptations of definitions from Section \ref{sec.matching-game}.

\begin{definition}
	\label{def.Gen-player}
	A generic player in $V$ will be denoted by $q$. We will say that $q$ is:
	\begin{enumerate}
		\item {\em essential} if $q$ is matched in every maximum weight integral matching in $G$.
		\item {\em viable} if there is a maximum weight integral matching $M$ such that $q$ is matched in $M$ and another, $M'$ such that $q$ is not matched in $M'$. 	
		\item {\em subpar} if for every maximum weight integral matching $M$ in $G$, $q$ is not matched in $M$. 	
		\end{enumerate}
\end{definition}

\begin{definition}
\label{def.Gen-player-paid}
	Let $y$ be an imputation in the core. We will say that $q$ {\em gets paid in $y$} if $y_q > 0$ and {\em does not get paid} otherwise. Furthermore, $q$ is {\em paid sometimes} if there is at least one imputation in the core under which $q$ gets paid, and it is {\em never paid} if it is not paid under every imputation. 
\end{definition}

\begin{definition}
	\label{def.Gen-team}
	By a {\em team} we mean an edge in $G$; a generic one will be denoted as $e = (u, v)$. We will say that $e$ is:
	\begin{enumerate}
		\item {\em essential} if $e$ is matched in every maximum weight matching in $G$.
		\item {\em viable} if there is a maximum weight matching $M$ such that $e \in M$, and another, $M'$ such that $e \notin M'$. 
		\item {\em subpar} if for every maximum weight matching $M$ in $G$, $e \notin M$. 
	\end{enumerate}
	\end{definition}
	
\begin{definition}
\label{def.Gen-team-paid}
	 Let $y$ be an imputation in the core of a concurrent game. We will say that $e$ is {\em fairly paid in $y$} if $y_u + y_v = w_e$ and it is {\em overpaid} if $y_u + y_v > w_e$\footnote{Observe that by the first constraint of the dual LP (\ref{eq.core-dual-bipartite}), these are the only possibilities.}. Finally, we will say that $e$ is {\em always paid fairly} if it is fairly paid in every imputation in the core, and it is {\em sometimes overpaid} if there is a core imputation in which it is overpaid.
\end{definition}

\begin{theorem}
	\label{thm.Gen-insights}
	The following hold:
\begin{enumerate}
	\item For every player $q \in (U \cup V)$: 
		\[ q \ \mbox{is paid sometimes}  \ \Rightarrow \ q \ \mbox{is essential} \]
	\item 	 For every team $e \in E$: 
		\[ e \ \mbox{is viable or essential} \ \Rightarrow \ e \ \mbox{is always paid fairly} \]
\end{enumerate}  
\end{theorem}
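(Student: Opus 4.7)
The plan is to adapt the complementary slackness arguments of Theorems~\ref{thm.vertices} and~\ref{thm.edges} to the primal-dual pair LP~(\ref{eq.core-primal}) and LP~(\ref{eq.core-dual}). Since the game is concurrent, $Q_f = Q_i$, so every maximum weight integral matching in $G$ is an optimal solution to LP~(\ref{eq.core-primal}), and by Theorem~\ref{thm.Deng} the core is exactly the set of optimal solutions to LP~(\ref{eq.core-dual}). Complementary slackness therefore directly relates each core imputation to each integral maximum weight matching of $G$.

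For part~1, suppose $q \in V$ is paid sometimes, so that some optimal dual $y$ satisfies $y_q > 0$. Complementary slackness on the vertex constraint at $q$ forces every optimal primal $x$ to match $q$ fully, i.e., $\sum_{j:(q,j)\in E} x_{qj} = 1$. Specializing $x$ to the indicator vector of an arbitrary maximum weight integral matching $M$ (which is a primal optimum by concurrency), we conclude that $q$ is matched in $M$. Since $M$ was arbitrary, $q$ is essential.

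For part~2, suppose the team $e = (u,v)$ is viable or essential, so $e$ lies in at least one maximum weight integral matching $M$. The indicator vector of $M$ is an optimal primal $x$ with $x_e = 1$, and complementary slackness on the edge constraint of $e$ yields $y_u + y_v = w_e$ for every optimal dual $y$. Thus every core imputation pays $e$ fairly.

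The hard point is that only these forward implications survive, whereas the analogous statements in Theorems~\ref{thm.vertices} and~\ref{thm.edges} were equivalences. The obstruction is precisely that Corollary~\ref{cor.Birkhoff} fails for non-bipartite graphs: by Theorem~\ref{thm.Balinski}, optimal fractional solutions of LP~(\ref{eq.core-primal}) may carry half-integral weight on odd cycles, and such solutions are not necessarily convex combinations of integral optimal matchings. Consequently, the strict-complementarity step used in the assignment game produces a fractional optimal primal $x$ with $x(\delta(q)) < 1$ (respectively $x_e = 0$) that cannot, in general, be exchanged for an integral maximum weight matching witnessing non-essentiality of $q$ (respectively non-viability of $e$). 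I expect this gap to be the main obstacle to upgrading either implication to an if-and-only-if in the concurrent setting, and I would therefore not attempt such an upgrade; the theorem is stated and proved only in the direction that complementary slackness supplies unconditionally.
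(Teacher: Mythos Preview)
Your proof is correct and follows essentially the same approach as the paper: both parts invoke complementary slackness between LP~(\ref{eq.core-primal}) and LP~(\ref{eq.core-dual}), using concurrency (via Theorem~\ref{thm.Deng}) to guarantee that maximum weight integral matchings are primal optima and core imputations are dual optima. Your closing discussion of why the converse implications fail---namely that Corollary~\ref{cor.Birkhoff} breaks down because Balinski's half-integral vertices need not be convex combinations of integral optima---also mirrors the paper's explanation.
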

	
Observe that the first statement of Theorem \ref{thm.Gen-insights} is equivalent to the forward direction of Theorem \ref{thm.vertices}, and the proof is also identical. The second statement of Theorem \ref{thm.Gen-insights} is equivalent to the reverse direction of Theorem \ref{thm.vertices} and again the proof is identical.

The proofs of the reverse direction of Theorem \ref{thm.vertices} and the forward direction of Theorem \ref{thm.edges} do not hold for general graphs. Counter-examples to these statements are given in Example \ref{ex.6} and Example \ref{ex.7}, respectively. In bipartite graphs, both these facts use Corollary \ref{cor.Birkhoff}, which follows from Theorem \ref{thm.Birkhoff}. The latter fact does not hold for LP (\ref{eq.core-primal-general}); however, a weaker fact, given   in Theorem \ref{thm.Balinski}, holds.

\begin{example}
	\label{ex.6}
Consider the game depicted in Figure \ref{fig.6}. Let the weights of edges $(v_1, v_2)$, $(v_2, v_3)$, $(v_3, v_1)$ and $(v_1, v_4)$ be $1.5, 1, 1.5$ and $1$, respectively. Then an optimal integral matching is $\{ (v_1, v_4), (v_2, v_3) \}$, having weight 2. The three-cycle $v_1, v_2, v_3$, taken to the extent of half, is a fractional matching of the same weight. Therefore, this graph has non-empty core. It has a unique core imputation which assigns profits of $1, {1 \over 2}, {1 \over 2}, 0$ to $v_1, v_2, v_3, v_4$, respectively. Since $v_4$ is essential but is not paid in the unique core imputation, this game provides a counter-example to the reverse direction of Theorem \ref{thm.vertices} in general graphs.  
\end{example}

%%%%%%%%%%%%%%%%%%%%%%%%%%%%%%%%%%%%%%%%%%%%%%%%%%%%%%%%%%%

\begin{figure}[h]
\begin{center}
\includegraphics[width=2.4in]{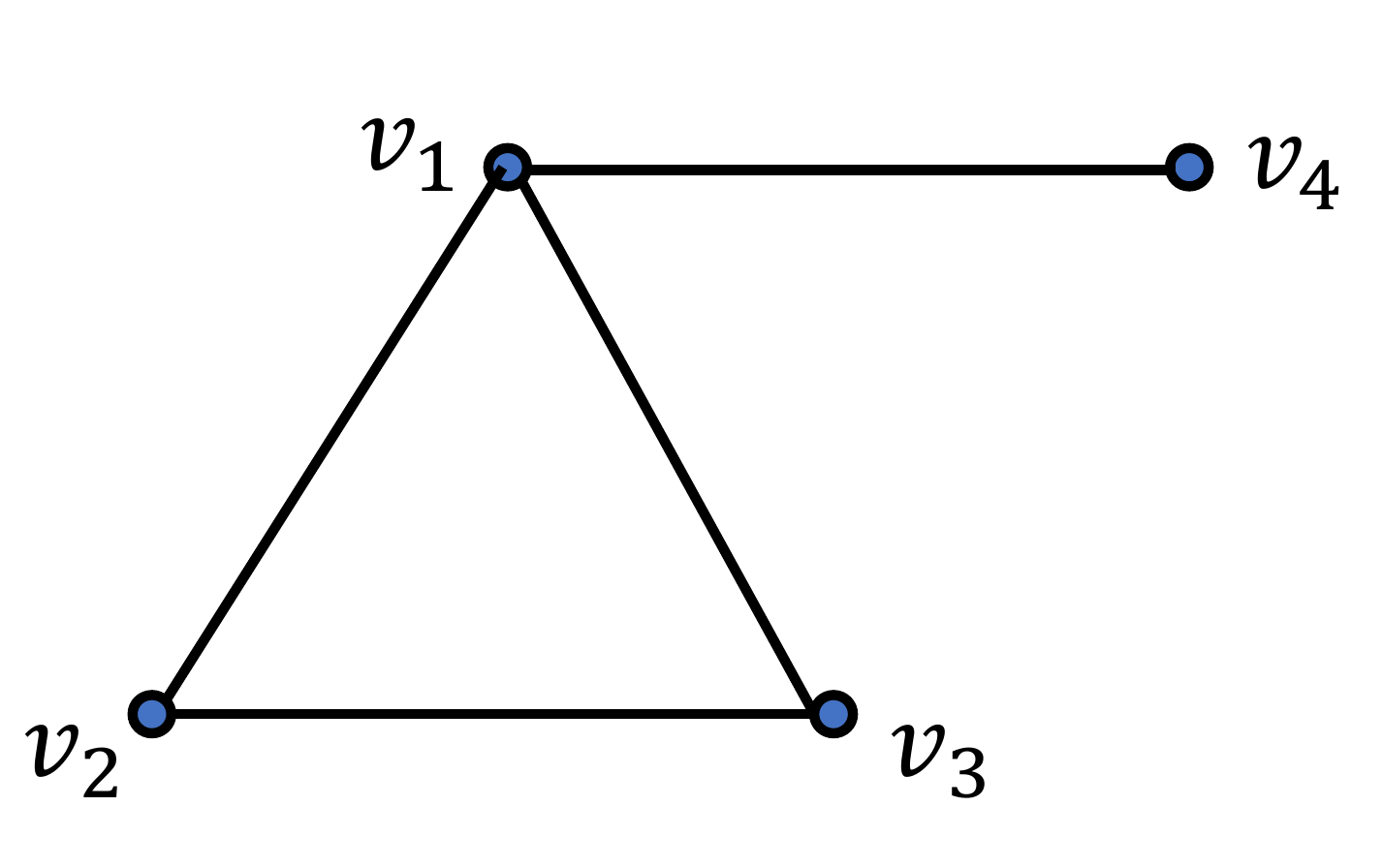}
\caption{The counter-example related to Theorem \ref{thm.Gen-insights}.}
\label{fig.6}
\end{center}
\end{figure}

%%%%%%%%%%%%%%%%%%%%%%%%%%%%%%%%%%%%%%%%%%%%%%%%%%%%%%%%%%%%%%%

Since the first statement of Theorem \ref{thm.Gen-insights} is weaker than Theorem \ref{thm.vertices}, the following corollary is weaker than Corollary \ref{cor.vertices}.  

\begin{corollary}
	\label{cor.Gen-vertices}
	In a concurrent game, the set of essential players is non-empty and in every core imputation, the entire worth of the game is distributed among essential players. 
\end{corollary}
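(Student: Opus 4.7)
The plan is to derive this corollary as a direct consequence of the first statement of Theorem \ref{thm.Gen-insights}, using its contrapositive together with the standing assumption that edge weights are positive.

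First, I would take the contrapositive of Theorem \ref{thm.Gen-insights}(1): for every player $q \in V$, if $q$ is not essential, then $q$ is never paid in any core imputation, i.e., $y_q = 0$ for every core imputation $y$. Since by Definition \ref{def.imputation} we have $\sum_{q \in V} y_q = p(V)$, this immediately yields the second assertion of the corollary, namely that the entire worth of the game is concentrated on essential players.

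For non-emptiness of the essential set, I would argue as follows. Because the game is concurrent and edge weights are positive, a maximum weight integral matching contains at least one edge, so $p(V) > 0$. Fix any core imputation $y$, which exists by the concurrency assumption together with Theorem \ref{thm.Deng}. Since $\sum_{q \in V} y_q = p(V) > 0$, at least one coordinate $y_q$ must be strictly positive; that $q$ is then paid sometimes, so by the forward direction of Theorem \ref{thm.Gen-insights}(1), $q$ is essential. Hence the set of essential players is non-empty.

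I do not anticipate a genuine obstacle here, as the corollary is a direct logical consequence of Theorem \ref{thm.Gen-insights}(1) combined with the definition of an imputation; the argument is structurally identical to the derivation of Corollary \ref{cor.vertices} in the bipartite case. The reason this corollary is \emph{weaker} than Corollary \ref{cor.vertices} — and the only place where the parallel with the bipartite proof breaks down — is that Theorem \ref{thm.Gen-insights}(1) is one-directional: there may exist essential players who are never paid in any core imputation (witness $v_4$ in Example \ref{ex.6}). Consequently, the plan deliberately avoids claiming the converse, and the proof should be noticeably shorter than that of Corollary \ref{cor.vertices}.
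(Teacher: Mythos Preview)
Your proposal is correct and mirrors exactly the reasoning the paper uses: the corollary is stated as an immediate consequence of the (one-directional) first statement of Theorem~\ref{thm.Gen-insights}, combined with the positivity of edge weights to ensure $p(V)>0$. Your observation that the missing reverse implication is precisely what makes this corollary weaker than Corollary~\ref{cor.vertices} is also in line with the paper's own commentary.
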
 

The contrapositive of the second statement of Theorem \ref{thm.Gen-insights} is the following:

 \begin{center}
{\em If a team is sometimes overpaid, then it is subpar.}
\end{center}

\begin{example}
	\label{ex.7}
	In the game defined in Example \ref{ex.5}, the teams $(v_4, v_7), (v_1, v_2), (v_2, v_3)$ and $(v_1, v_7)$ are all subpar, since they are not matched in any maximum weight matching. Of these, only the first team is overpaid in the unique imputation in the core, the rest are fairly paid. On the other hand,	the vertices which get positive profit are precisely the essential vertices. 
\end{example}
 
Corollary \ref{cor.Gen-degen} is identical to Corollary \ref{cor.degen}.

\begin{corollary}
	\label{cor.Gen-degen}
	In the presence of degeneracy, imputations in the core of a concurrent game treat:
	\begin{itemize}
			\item  viable players in the same way as subpar players, namely they are never paid.  
		\item viable teams in the same way as essential teams, namely they are always fairly paid. 
	\end{itemize}
\end{corollary}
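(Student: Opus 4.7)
The plan is to derive both bullets directly from Theorem \ref{thm.Gen-insights} by contraposition and by specialization, so no new combinatorial work is required.

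For the first bullet, I observe that a viable player $q$ is, by Definition \ref{def.Gen-player}, unmatched in at least one maximum weight integral matching $M'$, and a subpar player $q$ is unmatched in every such matching. In particular, neither type of player is essential. Taking the contrapositive of the first statement of Theorem \ref{thm.Gen-insights}, ``$q$ is paid sometimes $\Rightarrow q$ is essential'' becomes ``$q$ is not essential $\Rightarrow q$ is never paid''. Applying this to both viable and subpar players yields that each of them is never paid, which is precisely the assertion that the core treats them in the same way.

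For the second bullet, I note that a viable team $e$ is, by Definition \ref{def.Gen-team}, matched in at least one maximum weight matching $M$, and an essential team $e$ is matched in every maximum weight matching. Thus both kinds of teams satisfy the hypothesis ``viable or essential'' appearing in the second statement of Theorem \ref{thm.Gen-insights}. The conclusion of that statement is that $e$ is always paid fairly, i.e., in every core imputation $y$ we have $y_u + y_v = w_e$ for $e = (u,v)$. So both categories of teams are always paid fairly, matching the claim.

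There is no real obstacle here: the entire argument is logical bookkeeping on top of Theorem \ref{thm.Gen-insights}. The only subtlety worth mentioning is that Theorem \ref{thm.Gen-insights} has already been established for concurrent games via complementary slackness applied to LPs (\ref{eq.core-primal}) and (\ref{eq.core-dual}), so the corollary inherits its validity within the concurrent setting without any further appeal to Theorem \ref{thm.Balinski} or to Corollary \ref{cor.Birkhoff}. Writing the proof as two sentences, one per bullet, suffices.
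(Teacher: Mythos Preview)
Your proposal is correct and follows exactly the approach the paper intends: the paper offers no separate proof of Corollary~\ref{cor.Gen-degen}, merely noting that it is ``identical to Corollary~\ref{cor.degen}'', which in turn is obtained by reading off the contrapositive and the direct implication from the relevant theorem. Your derivation of the first bullet by contraposition of part~1 of Theorem~\ref{thm.Gen-insights} and of the second bullet by direct application of part~2 is precisely that logic made explicit.
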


\section{The Core of Bipartite $b$-Matching Games}
\label{sec.b-matching-game}

In this section, we will define two versions of the {\em bipartite $b$-matching game} and we will study their core imputations; both versions generalize the assignment game.

\subsection{Definitions and Preliminary Facts}
\label{sec.b-prelim}

As in the assignment game, let $G = (U, V, E), \ w: E \rightarrow \cR_+$ be the underlying bipartite graph and edge-weight function. Let function $b: U \cup V \rightarrow \ZZ_+$ give a bound on the number of times a vertex can be matched. Under the {\em unconstrained bipartite $b$-matching game}, each edge can be matched multiple number of times and under the {\em constrained bipartite $b$-matching game}, each edge can be matched at most once. Observe that even in the first version, limits imposed by $b$ on vertices will impose limits on edges  --- thus edge $(i, j)$ can be matched at most $\min \{b_i, b_j\}$ times. 

The {\em worth} of a coalition $(S_u \cup S_v)$, with $S_u \subseteq U, S_v \subseteq V$, is the weight of a maximum weight $b$-matching in the graph $G$ restricted to vertices in $(S_u \cup S_v)$ only; we will denote this by $p(S_u \cup S_v)$. Whether an edge can be matched at most once or more than once depends on the version of the problem we are dealing with. $p(U \cup V)$ is called the {\em worth of the game}. The {\em characteristic function} of the game is defined to be $p: 2^{U \cup V} \rightarrow \cR_+$. Definitions \ref{def.imputation} and \ref{def.core}, defining an imputation and the core, carry over unchanged from the assignment game.  

The tennis setting, given in the Introduction, provides a vivid description of these two variants  of the $b$-matching game as well. Let $K$ denote the maximum $b$-value of a vertex and assume that the tennis club needs to enter mixed doubles teams into $K$ tennis tournaments. In the first variant, a team can play in multiple tournaments and in the second version, a team can play in at most one tournament. In both cases, a player $i$ can play in at most $b_i$ tournaments. The goal of the tennis club is to maximize its profit over all the tournaments and hence picks a maximum weight $b$-matching in $G$. An imputation in the core gives a way of distributing the profit in such a way that no sub-coalition has an incentive to secede. 

Linear program (\ref{eq.b-uncon-core-primal-bipartite}) gives the LP-relaxation of the problem of finding a maximum weight $b$-matching for the unconstrained version. In this program, variable $x_{ij}$ indicates the extent to which edge $(i, j)$ is picked in the solution; observe that  there is no upper bound on the variables $x_{ij}$ since an edge can be matched any number of times.

	\begin{maxi}
		{} {\sum_{(i, j) \in E}  {w_{ij} x_{ij}}}
			{\label{eq.b-uncon-core-primal-bipartite}}
		{}
		\addConstraint{\sum_{(i, j) \in E} {x_{ij}}}{\leq b_i \quad}{\forall i \in U}
		\addConstraint{\sum_{(i, j) \in E} {x_{ij}}}{\leq b_j }{\forall j \in V}
		\addConstraint{x_{ij}}{\geq 0}{\forall (i, j) \in E}
	\end{maxi}

Taking $u_i$ and $v_j$ to be the dual variables for the first and second constraints of (\ref{eq.b-uncon-core-primal-bipartite}), we obtain the dual LP: 

 	\begin{mini}
		{} {\sum_{i \in U}  {b_i u_{i}} + \sum_{j \in V} {b_j v_j}} 
			{\label{eq.b-uncon-core-dual-bipartite}}
		{}
		\addConstraint{ u_i + v_j}{ \geq w_{ij} \quad }{\forall (i, j) \in E}
		\addConstraint{u_{i}}{\geq 0}{\forall i \in U}
		\addConstraint{v_{j}}{\geq 0}{\forall j \in V}
	\end{mini}

	\bigskip

Linear program  (\ref{eq.b-con-core-primal-bipartite}) gives the LP-relaxation of the problem of finding a maximum weight $b$-matching for the constrained version. Observe that in this program, variables $x_{ij}$ are upper bounded by 1, since an edge can be matched at most once.

		\begin{maxi}
		{} {\sum_{(i, j) \in E}  {w_{ij} x_{ij}}}
			{\label{eq.b-con-core-primal-bipartite}}
		{}
		\addConstraint{\sum_{(i, j) \in E} {x_{ij}}}{\leq b_i \quad}{\forall i \in U}
		\addConstraint{\sum_{(i, j) \in E} {x_{ij}}}{\leq b_j }{\forall j \in V}
		\addConstraint{x_{ij}}{\leq 1}{\forall (i, j) \in E}
		\addConstraint{x_{ij}}{\geq 0}{\forall (i, j) \in E}
	\end{maxi}

\begin{remark}
\label{rem.b-unimodular}
In both in LPs (\ref{eq.b-uncon-core-primal-bipartite}) and (\ref{eq.b-con-core-primal-bipartite}),  the matrices of coefficients of the constraints are totally unimodular \cite{LP.book}, and therefore both LPs always have integral optimal solutions.
\end{remark}

Taking $u_i$, $v_j$ and $z_{ij}$ to be the dual variables for the first, second and third  constraints of (\ref{eq.b-con-core-primal-bipartite}), we obtain the dual LP: 

 	\begin{mini}
		{} {\sum_{i \in U}  {b_i u_{i}} + \sum_{j \in V} {b_j v_j} + \sum_{(i, j) \in E}  {z_{ij}}} 
			{\label{eq.b-con-core-dual-bipartite}}
		{}
		\addConstraint{u_i + v_j + z_{ij}}{ \geq w_{ij} \quad }{\forall (i, j) \in E}
		\addConstraint{u_{i}}{\geq 0}{\forall i \in U}
		\addConstraint{v_{j}}{\geq 0}{\forall j \in V}
		\addConstraint{z_{ij}}{\geq 0}{\forall (i, j) \in E}
	\end{mini}

\subsubsection{The Framework of Deng et al. \cite{Deng1999algorithms}}
\label{sec.Deng}

In this section, we present the framework of Deng et al. \cite{Deng1999algorithms}, which was mentioned in the Introduction, and point out why it does not apply to the two versions of the $b$-matching game. Let $T = \{1, \cdots, n\}$ be the set of $n$ agents of the game. Let $w \in \R^m_+$ be an $m$-dimensional non-negative real vector specifying the weights of certain objects; in the assignment game, the objects are edges of the underlying graph. Let $A$ be an $n \times m$ matrix with $0/1$ entries whose $i^{th}$ row corresponds to agent $i \in T$. Let $x$ be an $m$-dimensional vector of variables and $\mathbb{1}$ be the $n$-dimensional vector of all 1s. Assume that the worth of the game is given by the objective function value of following integer program. 

	\begin{maxi}
		{} {w \cdot x}
			{\label{eq.IP}}
		{}
		\addConstraint{}{Ax \leq \mathbb{1}}
		\addConstraint{}{x \in \{0, 1\}}
	\end{maxi}

Moreover, for a sub-coalition, $T' \subseteq T$ assume that its worth is given by the integer program obtained by replacing $A$ by $A'$ in (\ref{eq.IP}), where $A'$ picks the set of rows corresponding to agents in $T'$. The LP-relaxation of (\ref{eq.IP}) is:

	\begin{maxi}
		{} {w \cdot x}
			{\label{eq.Primal}}
		{}
		\addConstraint{}{Ax \leq \mathbb{1}}
		\addConstraint{}{x \geq 0}
	\end{maxi}

Deng et al. proved that if LP (\ref{eq.Primal}) always has an integral optimal solution, then the set of core imputations of this game is exactly the set of optimal solutions to the dual of LP (\ref{eq.Primal}). 

As stated in Remark \ref{rem.b-unimodular}, the matrices of coefficients of both LPs (\ref{eq.b-uncon-core-primal-bipartite}) and (\ref{eq.b-con-core-primal-bipartite}) are totally unimodular and therefore these LPs always have integral optimal solutions. However, they still don't fall in the above-stated framework because their right-hand-sides are $b$ values of the vertices and not $\mathbb{1}$. 

%%%%%%%%%%%%%% Remove next line %%%%%%%%%%

\subsection{The Core of the Uniform Bipartite $b$-Matching Game}
\label{sec.b-uniform-core}

We first consider the special case of the bipartite $b$-matching game in which $b$ is the constant function is called  the {\em uniform bipartite $b$-matching game}; we will denote the constant by $b_c \in \ZZ_+$. The next theorem is analogous to the Shapley-Shubik Theorem.

\begin{theorem}
	\label{thm.b-uniform}
	For the uniform bipartite $b$-matching game, the dual completely characterizes its core.
\end{theorem}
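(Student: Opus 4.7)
The plan is to reduce the uniform bipartite $b$-matching game to the standard assignment game and invoke the Shapley-Shubik Theorem (Theorem \ref{thm.SS}), leveraging the fact that $b \equiv b_c$ is constant across all vertices.

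For the unconstrained version, let $G$ denote the uniform game and let $G'$ denote the assignment game on the same bipartite graph with the same edge weights. The first step I would take is to establish that $p_G(S) = b_c \cdot p_{G'}(S)$ for every coalition $S = S_u \cup S_v$. Indeed, the substitution $y_{ij} = x_{ij}/b_c$ turns LP (\ref{eq.b-uncon-core-primal-bipartite}) restricted to $S$ into LP (\ref{eq.core-primal-bipartite}) restricted to $S$ with the objective rescaled by $b_c$, and by Remark \ref{rem.b-unimodular} both LPs equal their respective worths. An inspection of the duals further shows that LPs (\ref{eq.b-uncon-core-dual-bipartite}) and (\ref{eq.core-dual-bipartite}) share the same feasible region and have objectives that differ only by the factor $b_c$; hence their sets of optimal solutions coincide.

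With these two facts in hand, I would exhibit the bijection $(\tilde u, \tilde v) \leftrightarrow (\tilde u / b_c, \tilde v / b_c)$ between the core of $G$ and the core of $G'$: the core inequality $\sum_{S_u} \tilde u_i + \sum_{S_v} \tilde v_j \geq p_G(S) = b_c\, p_{G'}(S)$ for $G$ becomes, after dividing by $b_c$, precisely the core inequality for $G'$. Combining this bijection with Theorem \ref{thm.SS} applied to $G'$ identifies the core of $G$ with the set of optimal solutions of LP (\ref{eq.b-uncon-core-dual-bipartite}), up to the rescaling by $b_c$, which is the asserted characterization.

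For the constrained uniform version, the scaling reduction breaks down because of the extra primal constraints $x_{ij} \leq 1$, and the dual LP (\ref{eq.b-con-core-dual-bipartite}) carries edge variables $z_{ij}$ that must be redistributed to the endpoints in order to form an imputation. I expect this to be the main obstacle: while the forward direction (dual optimum plus any endpoint-splitting of $z$ gives a core imputation) follows directly from weak duality applied to restricted duals, the reverse direction requires showing that every core imputation is captured. The plan is to use complementary slackness on an integral primal optimum, guaranteed by Remark \ref{rem.b-unimodular}, to identify the edges on which $z_{ij}$ can be positive, and then use the core inequality on coalitions of the form $\{i, j_1, \ldots, j_k\}$ with $k \leq b_c$ to certify that a suitable $(u, v, z)$ and splitting can always be reconstructed from the given imputation.
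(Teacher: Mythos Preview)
Your treatment of the unconstrained uniform game is correct and amounts to the same argument the paper gives, only packaged as a reduction. The paper proves both directions directly: for the reverse direction it observes that for any edge $e=(i,j)$ the coalition $\{i,j\}$ has worth $b_c\cdot w_e$ (since $e$ may be taken $b_c$ times), so any core imputation $(\alpha,\beta)$ satisfies $\alpha_i+\beta_j\ge b_c\, w_e$, and dividing by $b_c$ yields dual feasibility of $(\alpha/b_c,\beta/b_c)$ with optimal objective value. You instead rescale the entire characteristic function by $b_c$ and invoke Theorem~\ref{thm.SS} on the resulting assignment game. Both routes rest on the same identity $p_G(S)=b_c\cdot p_{G'}(S)$; the paper simply chooses to be self-contained (cf.\ Remark~\ref{rem.simpler}), whereas you cash in the earlier theorem. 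Either is fine.

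Your second half, on the constrained uniform game, is not part of what the theorem asserts or proves: the paper's argument lives entirely in the unconstrained setting, precisely because it needs $c(\{i,j\})=b_c\cdot w_e$, which fails once each edge is capped at one. More to the point, your plan for the reverse direction there---``reconstruct $(u,v,z)$ and a splitting from the imputation via complementary slackness and coalitions $\{i,j_1,\dots,j_k\}$''---is a hope rather than a proof sketch. The paper in fact shows in Section~\ref{sec.b-con-core} that for the constrained (non-uniform) game the dual does \emph{not} capture all core imputations; if you wanted to salvage a full characterization in the uniform constrained case you would have to isolate exactly what uniformity buys, and nothing in your outline does that. For the theorem as stated, simply drop this part.
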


\begin{proof}
The proof hinges on the fact that the polytope defined by the constraints of the primal LP, (\ref{eq.b-uncon-core-primal-bipartite}) has integral vertices, i.e., they are $b$-matchings in $G$. 

Let $(u, v)$ be an optimal dual solution. By integrality and the LP-duality theorem, the worth of the game, 
$$c(U \cup V) = b_c \cdot (\sum_{i \in U} {u_i} + \sum_{j \in V} {v_j}) .$$
Therefore $(\alpha, \beta)$ is an imputation, where $\alpha_i = b_c \cdot u_i$ and $\beta_j = b_c \cdot v_j$. 
	 
Consider a sub-coalition $(S_u \cup S_v)$, with $S_u \subseteq U, S_v \subseteq V$. Let $G'$ denote the subgraph of $G$ induced on the vertices $(S_u \cup S_v)$. Once again by integrality and the LP-duality theorem, we get that worth of $(S_u \cup S_v)$ equals the objective function value of  the optimal dual for graph  $G'$. Since the restriction of $(u, v)$ to $G'$ is a feasible dual for $G'$, we get that $b_c \cdot (\sum_{i \in S_u} {u_i} + \sum_{j \in S_v} {v_j}) =  \sum_{i \in S_u} {\alpha_i} + \sum_{j \in S_v} {\beta_j} \geq c(S_u \cup S_v)$, i.e., the core condition is satisfied for sub-coalition $(S_u \cup S_v)$. Therefore $(\alpha, \beta)$ is a core imputation.

Next, let $(\alpha, \beta)$ be a core imputation. By integrality and the LP-duality theorem,  
$$ \sum_{i \in U} {\alpha_i} + \sum_{j \in V} {\beta_j} \ = c(U \cup V) .$$
Let $u_i = {1 \over b_c} \alpha_i$ and $v_j = {1 \over b_c} \beta_j$. We will show that $(u, v)$ is an optimal dual solution for $G$, thereby proving the theorem.

Corresponding to any edge $e = (i, j)$, consider the sub-coalition $S = \{ i, j\}$. The worth of this sub-coalition is obtained by picking edge $e$ $b_c$ times, i.e., $c(\{i, j\}) = b_c \cdot w_e$.  Since $(\alpha, \beta)$ be a core imputation, the profit allocated to this sub-coalition is at least its worth, i.e., $\alpha_i + \beta_j \geq b_c \cdot w_e$. Dividing by $b_c$ we get $u_i + v_j \geq w_e$. Therefore $(u, v)$ satisfies the constraint in LP (\ref{eq.b-uncon-core-dual-bipartite}) and is hence a dual feasible solution. Since $(\alpha, \beta)$ is a core imputation, $ \sum_{i \in U} {\alpha_i} + \sum_{j \in V} {\beta_j} \ = c(U \cup V) .$ Therefore, by integrality and the LP-duality theorem, the objective function value of this dual equals the  optimal primal. Therefore $(u, v)$ is an optimal dual solution.  
\end{proof}

\begin{remark}
	\label{rem.simpler}
	Clearly, the proof given above can be used for characterizing the core of the assignment game as well. We note that it is simpler and more modular than the proof give in \cite{Shapley1971assignment}. 
\end{remark}

Next, we prove that the core of the uniform bipartite $b$-matching game also has two extreme imputations, as claimed for the assignment game in Theorem \ref{thm.extreme}. For $i \in U$, let $\alpha_i^h$ and $\alpha_i^l$ denote the highest and lowest profits that $i$ accrues among all imputations in the core. Similarly, for $j \in V$, let $\beta_j^h$ and $\beta_j^l$ denote the highest and lowest profits that $j$ accrues in the core. Let $\alpha^h$ and $\alpha^l$ denote the vectors whose components are $\alpha_i^h$ and $\alpha_i^l$, respectively. Similarly, let $\beta^h$ and $\beta^l$ denote vectors whose components are $\beta_j^h$ and $\beta_j^l$, respectively. 

\begin{theorem}
	\label{thm.b-extreme}	
	The core of the uniform bipartite $b$-matching game has two extreme imputations; they are $(\alpha^h, \beta^l)$ and $(\alpha^l, \beta^h)$.
\end{theorem}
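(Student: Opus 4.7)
The plan is to mimic the classical Shapley--Shubik argument for the assignment game, leveraging Theorem~\ref{thm.b-uniform}. Since that theorem identifies the core with optimal solutions to the dual LP~(\ref{eq.b-uncon-core-dual-bipartite}) (up to the scaling $\alpha = b_c\cdot u,\ \beta = b_c\cdot v$), it suffices to exhibit a single optimal dual $(u^*,v^*)$ satisfying $u^*_i = \alpha_i^h/b_c$ for every $i\in U$ and $v^*_j = \beta_j^l/b_c$ for every $j\in V$, and a symmetric one attaining the other extreme. The extremality claim is then immediate from the definitions of $\alpha^h,\alpha^l,\beta^h,\beta^l$.

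The key structural lemma I would prove first is that the set $\mathcal D$ of optimal dual solutions is closed under a ``twisted lattice'' operation: if $(u_1,v_1),(u_2,v_2)\in\mathcal D$, then both
\[ (\,\max(u_1,u_2),\,\min(v_1,v_2)\,) \quad \text{and} \quad (\,\min(u_1,u_2),\,\max(v_1,v_2)\,) \]
lie in $\mathcal D$, with min/max taken componentwise. Feasibility of, say, the first pair is a short case analysis on each edge $(i,j)\in E$: if $u_1^i\ge u_2^i$ the new bound is $u_1^i + \min(v_1^j,v_2^j)$, and whichever of $v_1^j,v_2^j$ realizes the min is paired with $u_1^i\ge u_k^i$ to dominate $w_{ij}$. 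Optimality falls out of the identity $\max(a,b)+\min(a,b)=a+b$: the two candidate objectives sum to exactly twice the dual optimum, so by feasibility each must individually attain it.

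With this closure property in hand, I would first choose, for each $i\in U$, an optimal dual $(u^{(i)},v^{(i)})$ with $u^{(i)}_i = \alpha_i^h/b_c$, and iteratively combine them via the $(\max,\min)$ operation. The result is a single $(u^*,v^*)\in\mathcal D$ with $u^*_i = \alpha_i^h/b_c$ \emph{simultaneously} for every $i$. To conclude $v^*_j = \beta_j^l/b_c$ for every $j$, I would argue by contradiction: if $v^*_{j_0} > \beta_{j_0}^l/b_c$, pick $(\tilde u,\tilde v)\in\mathcal D$ with $\tilde v_{j_0}=\beta_{j_0}^l/b_c$ and apply the twisted operation to $(u^*,v^*)$ and $(\tilde u,\tilde v)$. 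The $u$-component collapses back to $u^*$, because $u^*$ is already componentwise maximal over $\mathcal D$; the $v$-component satisfies $\min(v^*,\tilde v)\le v^*$ everywhere with strict inequality at $j_0$. But all elements of $\mathcal D$ share the same dual objective, forcing $\sum_j v_j$ to be constant once $u$ is fixed --- a contradiction. A symmetric argument (swapping the roles of $U$ and $V$, and of max and min) then produces $(\alpha^l,\beta^h)$.

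The main obstacle I anticipate is the careful feasibility verification for the twisted lattice operation, particularly for edges whose two endpoints inherit their new values from different source duals; this is where the bipartition is used essentially, since the max/min are applied to disjoint vertex sets and can therefore be paired in a way that preserves the constraint $u_i+v_j\ge w_{ij}$. Once this closure lemma is established, everything else is bookkeeping built on top of Theorem~\ref{thm.b-uniform} and the integrality of LP~(\ref{eq.b-uncon-core-primal-bipartite}).
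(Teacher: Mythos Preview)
Your proposal is correct and follows essentially the same approach as the paper: your ``twisted lattice'' closure lemma is precisely the paper's Lemma~\ref{lem.extreme-lemma} (stated there directly for core imputations rather than for optimal duals, which by Theorem~\ref{thm.b-uniform} amounts to the same thing up to scaling by $b_c$). The paper in fact omits the details you supply --- both the feasibility/optimality verification and the iterative-combination-plus-contradiction argument deriving the extreme points from the lemma --- so your writeup is, if anything, more complete than the paper's.
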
 

The proof of this theorem follows from Lemma \ref{lem.extreme-lemma}, whose proof is straightforward and is omitted. Let $(q, r)$ and $(s, t)$ be two imputations in the core of the uniform bipartite $b$-matching game. For each $i \in U$, let 
\[ \un{\alpha_i} = \min(q_i, s_i) \ \ \ \  \mbox{and} \ \ \ \  \ov{\alpha_i} = \max (q_i, s_i) .\]
Further, for each $j \in R$, let
\[  \un{\beta_j} = \min (r_j , t_j) \ \ \ \  \mbox{and} \ \ \ \  \ov{\beta_j} = \max (r_j, t_j) .\] 

\begin{lemma}
	\label{lem.extreme-lemma}
	$(\un{\alpha} , \ov{\beta} )$ and $(\ov{\alpha}, \un{\beta} )$ are imputations in the core of the uniform bipartite $b$-matching game.
\end{lemma}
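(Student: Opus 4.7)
The plan is to leverage Theorem \ref{thm.b-uniform}, which identifies core imputations of the uniform bipartite $b$-matching game with $b_c$-scalings of optimal solutions to the dual LP (\ref{eq.b-uncon-core-dual-bipartite}). Write the given core imputations as $q = b_c \tilde{q}$, $r = b_c \tilde{r}$, $s = b_c \tilde{s}$, $t = b_c \tilde{t}$, so that $(\tilde{q}, \tilde{r})$ and $(\tilde{s}, \tilde{t})$ are both optimal dual solutions. Define $\tilde{\un{u}}_i = \min(\tilde{q}_i, \tilde{s}_i)$, $\tilde{\ov{u}}_i = \max(\tilde{q}_i, \tilde{s}_i)$ for $i \in U$ and $\tilde{\un{v}}_j = \min(\tilde{r}_j, \tilde{t}_j)$, $\tilde{\ov{v}}_j = \max(\tilde{r}_j, \tilde{t}_j)$ for $j \in V$. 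Since $b_c \tilde{\un{u}} = \un{\alpha}$ and $b_c \tilde{\ov{v}} = \ov{\beta}$ (and similarly for the other pair), by Theorem \ref{thm.b-uniform} it suffices to show that $(\tilde{\un{u}}, \tilde{\ov{v}})$ and $(\tilde{\ov{u}}, \tilde{\un{v}})$ are optimal dual solutions.

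For feasibility, non-negativity is immediate. For each edge $(i,j) \in E$, consider the pair $(\tilde{\un{u}}, \tilde{\ov{v}})$: if $\tilde{\un{u}}_i = \tilde{q}_i$, then $\tilde{\ov{v}}_j \geq \tilde{r}_j$, giving $\tilde{\un{u}}_i + \tilde{\ov{v}}_j \geq \tilde{q}_i + \tilde{r}_j \geq w_{ij}$; otherwise $\tilde{\un{u}}_i = \tilde{s}_i$, and then $\tilde{\ov{v}}_j \geq \tilde{t}_j$, giving $\tilde{\un{u}}_i + \tilde{\ov{v}}_j \geq \tilde{s}_i + \tilde{t}_j \geq w_{ij}$. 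The symmetric argument handles $(\tilde{\ov{u}}, \tilde{\un{v}})$.

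For optimality, use the identity $\min(a,b) + \max(a,b) = a + b$ componentwise to obtain
\[
\Bigl(\sum_{i \in U} \tilde{\un{u}}_i + \sum_{j \in V} \tilde{\ov{v}}_j\Bigr) + \Bigl(\sum_{i \in U} \tilde{\ov{u}}_i + \sum_{j \in V} \tilde{\un{v}}_j\Bigr) = \Bigl(\sum_{i \in U} \tilde{q}_i + \sum_{j \in V} \tilde{r}_j\Bigr) + \Bigl(\sum_{i \in U} \tilde{s}_i + \sum_{j \in V} \tilde{t}_j\Bigr).
\]
Both bracketed terms on the right equal $Z^{*}$, the common optimal dual objective (divided by $b_c$), so the right-hand side equals $2Z^{*}$. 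Since both pairs on the left are dual-feasible by the previous step, weak duality gives that each of the two left-hand sums is $\geq Z^{*}$; the only way their sum can equal $2Z^{*}$ is if each is exactly $Z^{*}$. Hence both $(\tilde{\un{u}}, \tilde{\ov{v}})$ and $(\tilde{\ov{u}}, \tilde{\un{v}})$ attain the dual optimum, and after multiplying by $b_c$, Theorem \ref{thm.b-uniform} delivers the two claimed core imputations.

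The proof is not difficult once the correspondence with optimal duals is exploited; the only subtlety is the symmetric min--max averaging trick in the optimality step, which requires considering both mixed pairs simultaneously rather than trying to establish optimality of each individually. Without Theorem \ref{thm.b-uniform}, one would have to verify the core inequality for every sub-coalition by hand, which would be considerably more cumbersome.
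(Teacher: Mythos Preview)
Your proof is correct. The paper omits the proof of this lemma, calling it ``straightforward,'' and your argument via Theorem~\ref{thm.b-uniform} (reducing to showing that the componentwise min/max of two optimal duals are again optimal duals) is exactly the natural route; the feasibility check and the $\min+\max$ averaging trick for optimality are both standard and carried out cleanly.
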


%%%%%%%%%%%%%%%%%%%%%%%%%%%%%%%%%%%%%%

\subsection{The Core of the Unconstrained Bipartite $b$-Matching Game}
\label{sec.b-uncon-core}

Let $I$ denote an instance of this game and let $C(I)$ denote its set of core imputations. We will  show in Theorem \ref{thm.b-uncon-SS} that corresponding to every optimal solution to the dual LP (\ref{eq.b-uncon-core-dual-bipartite}), there is an imputation in $C(I)$. Let $D(I)$ denote the set of all such core imputations. Since $D(I) \neq \emptyset$, we get Corollary \ref{cor.b-uncon} stating that the core of this game is non-empty. Next, we will give an instance $I$ such that $D(I) \subset C(I)$, i.e., unlike the assignment game, $I$ has core imputations that don't correspond to optimal solutions to the dual LP.

The correspondence between optimal solutions to the dual LP (\ref{eq.b-uncon-core-dual-bipartite}) and core imputations in $D(I)$ is as follows. Given an optimal solution $(u, v)$, define the profit allocation to $i \in U$ to be $\alpha_i = b_i \cdot u_i$ and that to $j \in V$ to be $\beta_j = b_j \cdot v_j$. 

\begin{theorem}
	\label{thm.b-uncon-SS}
	The profit-sharing method $(\alpha, \beta)$, which corresponds to an optimal solution $(u, v)$ to the dual LP (\ref{eq.b-uncon-core-dual-bipartite}), is an imputation in the core of the unconstrained bipartite $b$-matching game.
\end{theorem}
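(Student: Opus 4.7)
The plan is to adapt the forward direction of the proof of Theorem~\ref{thm.b-uniform} (which itself mirrors the forward direction of Shapley--Shubik) to non-uniform $b$. The two things to establish are: (i) that $(\alpha, \beta)$ is an imputation, i.e., its coordinates sum to $p(U \cup V)$; and (ii) that the core inequality $\sum_{i \in S_u} \alpha_i + \sum_{j \in S_v} \beta_j \geq p(S_u \cup S_v)$ holds for every sub-coalition $(S_u \cup S_v)$ with $S_u \subseteq U$ and $S_v \subseteq V$.

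For (i), I would invoke Remark~\ref{rem.b-unimodular}: the constraint matrix of LP (\ref{eq.b-uncon-core-primal-bipartite}) is totally unimodular, so the LP has an integral optimum whose objective value equals the worth of the game, $p(U \cup V)$. By strong LP duality, this equals the optimal dual objective $\sum_{i \in U} b_i u_i + \sum_{j \in V} b_j v_j$, which by definition of $\alpha, \beta$ is $\sum_{i \in U} \alpha_i + \sum_{j \in V} \beta_j$. Hence $(\alpha, \beta)$ is a valid imputation.

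For (ii), fix a sub-coalition $(S_u \cup S_v)$ and let $G'$ be the subgraph of $G$ induced on $S_u \cup S_v$. Form the restricted primal/dual pair for $G'$ by dropping every variable and constraint involving a vertex outside $S_u \cup S_v$. Again by Remark~\ref{rem.b-unimodular} applied to $G'$ and LP duality, the optimum of this restricted primal is $p(S_u \cup S_v)$ and equals the optimum of the restricted dual. The restriction of $(u, v)$ to $S_u \cup S_v$ is feasible for the restricted dual --- the only constraints are $u_i + v_j \geq w_{ij}$ for edges of $G'$, and these are inherited from the feasibility of $(u, v)$ in (\ref{eq.b-uncon-core-dual-bipartite}). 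Hence its objective value, $\sum_{i \in S_u} b_i u_i + \sum_{j \in S_v} b_j v_j = \sum_{i \in S_u} \alpha_i + \sum_{j \in S_v} \beta_j$, is at least the optimal restricted dual value, which equals $p(S_u \cup S_v)$.

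I do not anticipate a main obstacle: the argument is a clean dualization and the non-uniformity of $b$ enters only through the definition of $\alpha_i = b_i u_i$ and $\beta_j = b_j v_j$, which was built precisely so that the dual objective matches the sum of profits. The reason this is weaker than Theorem~\ref{thm.b-uniform} --- and only establishes the forward direction (existence, not characterization) --- is that the reverse direction of Theorem~\ref{thm.b-uniform} relied on dividing the per-vertex profit by a common constant $b_c$ to recover dual variables, which fails when the $b_i$ vary; that asymmetry is exactly what sets up the subsequent discussion that $D(I) \subsetneq C(I)$ in general.
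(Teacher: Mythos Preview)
Your proof is correct and follows essentially the same approach as the paper: total unimodularity plus strong duality for part (i), and feasibility of the restricted dual $(u,v)|_{S_u\cup S_v}$ for part (ii). The only cosmetic difference is that the paper's proof of Theorem~\ref{thm.b-uncon-SS} writes out the weak-duality chain explicitly on the primal side (bounding $\sum w_{ij}x'_{ij}$ via the dual constraint and then the degree constraints), whereas you invoke strong duality on the restricted LP --- exactly as the paper itself does in its proof of Theorem~\ref{thm.b-uniform}, which you cite as your model.
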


\begin{proof}
By Remark \ref{rem.b-unimodular}, LP (\ref{eq.b-uncon-core-primal-bipartite}) always has an optimal solution that is integral, i.e., there is always an optimal solution to this LP that is a maximum weight $b$-matching in $G$. Let $W$ be the weight of such a matching; clearly, $W = p(U \cup V)$.
	
	Since $(u, v)$ is an optimal solution to the dual LP, (\ref{eq.b-uncon-core-dual-bipartite}),  by the LP-Duality Theorem, 
	\[ \sum_{i \in U} {b_i u_i} + \sum_{j \in V} {b_j v_j} = \ W \ = \sum_{i \in U} {\alpha_i} + \sum_{j \in V} {\beta_j} .\]
	 Therefore the imputation $(\alpha, \beta)$ distributes the worth of the game among the agents. 
	 
Consider a sub-coalition $(S_u \cup S_v)$, with $S_u \subseteq U, S_v \subseteq V$. Let $G'$ denote the restriction of $G$ to the vertices in $(S_u \cup S_v)$ and let $E'$ be the edges of $G'$.  Let $x'$ denote a maximum weight unconstrained $b$-matching in $G'$. Corresponding to each edge $(i, j) \in E'$, $x'_{ij}$ is integral and the total profit which this sub-coalition can generate  by seceding is
\[ p(S_u \cup S_v) = \sum_{(i, j) \in E'} {w_{ij} x'_{ij}} \] 

We need to show that
\[ \sum_{(i, j) \in E'} {w_{ij} x'_{ij}} \leq \sum_{i \in S_u} {\alpha_i} + \sum_{j \in S_v} {\beta_j} ,\] 
thereby proving the theorem. 

Note that $x$ and $x'$ may differ on edges in $E'$. Clearly, $x'$ is a feasible solution to the restriction of LP (\ref{eq.b-uncon-core-primal-bipartite}) to $G'$. Furthermore, the restriction of $(u, v)$ to vertices in $G'$ is a feasible solution to the restriction of LP (\ref{eq.b-uncon-core-dual-bipartite}) to $G'$. In the proof given below, the first inequality follows from the constraint of the dual LP (\ref{eq.b-uncon-core-dual-bipartite}), and the second from the two constraints of the primal LP (\ref{eq.b-uncon-core-primal-bipartite}).

\[ \sum_{(i, j) \in E'} {w_{ij} x'_{ij}} \leq \sum_{(i, j) \in E'} {(u_i + v_j) \cdot x'_{ij}} \]
\[ = \sum_{i \in S_u} {\left( u_i \cdot \sum_{(i, j) \in E'} {x'_{ij}} \right) }  + \sum_{j \in S_v} {\left( v_j \cdot \sum_{(i, j) \in E'} {x'_{ij}} \right) } \] 
\[ \leq \sum_{i \in S_u} {b_i u_i} + \sum_{j \in S_v} {b_j v_j}  =  \sum_{i \in S_u} {\alpha_i} + \sum_{j \in S_v} {\beta_j} \] 
\end{proof}

\begin{corollary}
\label{cor.b-uncon}
	The core of the unconstrained bipartite $b$-matching game is always non-empty. 
\end{corollary}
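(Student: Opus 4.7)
The plan is to observe that Corollary \ref{cor.b-uncon} follows almost immediately from Theorem \ref{thm.b-uncon-SS}, which has already done the hard work: it exhibits, for every optimal dual solution $(u,v)$ to LP (\ref{eq.b-uncon-core-dual-bipartite}), an explicit core imputation $(\alpha,\beta)$ given by $\alpha_i = b_i u_i$ and $\beta_j = b_j v_j$. So the only thing left to verify is that at least one such optimal dual solution exists.

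First I would argue that the primal LP (\ref{eq.b-uncon-core-primal-bipartite}) is feasible (the all-zero vector is a feasible solution) and bounded (the objective is bounded above by $(\sum_i b_i) \cdot \max_e w_e$, since each vertex sum is capped at $b_i$ and weights are finite). By Remark \ref{rem.b-unimodular}, the primal in fact attains its optimum at an integral vertex, which corresponds to a maximum weight $b$-matching, so the primal optimum is finite. By the LP-Duality Theorem, the dual LP (\ref{eq.b-uncon-core-dual-bipartite}) is therefore also feasible and attains its optimum, and strong duality holds. Pick any optimal dual solution $(u,v)$.

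Now invoke Theorem \ref{thm.b-uncon-SS} directly with this $(u,v)$: the associated pair $(\alpha,\beta)$ with $\alpha_i = b_i u_i$ and $\beta_j = b_j v_j$ is an imputation lying in the core of the unconstrained bipartite $b$-matching game. This exhibits a core imputation, so the core is non-empty.

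There is no real obstacle here; the only thing to be slightly careful about is not to confuse this game with the constrained version, where the dual LP is (\ref{eq.b-con-core-dual-bipartite}) and the analogous correspondence between dual solutions and core imputations need not be as clean (indeed the paper has already announced that in that setting some core imputations do not arise from optimal dual solutions). For the unconstrained case, however, Theorem \ref{thm.b-uncon-SS} provides the one-line reduction, and Corollary \ref{cor.b-uncon} is immediate.
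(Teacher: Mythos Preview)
Your proposal is correct and follows essentially the same approach as the paper: the corollary is stated as an immediate consequence of Theorem \ref{thm.b-uncon-SS}, with the only additional observation being that an optimal dual solution to LP (\ref{eq.b-uncon-core-dual-bipartite}) exists. Your explicit justification of this existence via feasibility and boundedness of the primal is a harmless elaboration of what the paper leaves implicit.
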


\begin{remark}
	\label{rem.uncon-mapping}
	Observe that the mapping given from optimal solutions to the dual LP (\ref{eq.b-uncon-core-dual-bipartite}) to core imputations in $D(I)$ is a bijection. 
\end{remark}

%%%%%%%%%%%%%%%%%%%%%%%%%%%%%%%%%%%%%%%%%%%%%%%%%%%%%%%%%%%

\begin{figure}[h]
\begin{center}
\includegraphics[width=2.4in]{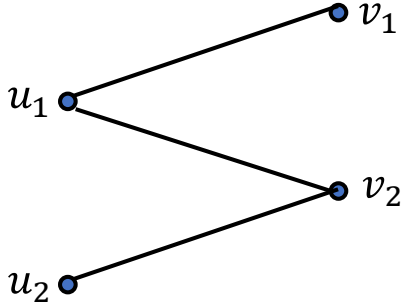}
\caption{The graph for Example \ref{ex.b-uncon}.}
\label{fig.7}
\end{center}
\end{figure}

%%%%%%%%%%%%%%%%%%%%%%%%%%%%%%%%%%%%%%%%%%%%%%%%%%%%%%%%%%%%%%%

\begin{example}
	\label{ex.b-uncon}
For the bipartite $b$-matching game defined by the graph of Figure \ref{fig.7}, let the $b$ values be $2, 1, 2, 1$ for $u_1, u_2, v_1, v_2$, and let the edge weights be $1, 3, 1$ for $(u_1, v_1), (u_1, v_2), (u_2, v_2)$. 
\end{example}

In this section, we will view the game defined in Example \ref{ex.b-uncon} as an unconstrained bipartite $b$-matching game and will show that it has a set of core imputations which do not correspond to optimal dual solutions, i.e., they lie in $C(I) - D(I)$. The optimal matching  picks edges $(u_1, v_1), (u_1, v_2)$ once each, for a total profit of 4. The unique optimal dual solution is $1, 0, 0, 2$ for $u_1, u_2, v_1, v_2$, and the corresponding core imputation is $2, 0, 0, 2$.

Let $\alpha_1, \alpha_2, \beta_1, \beta_2$ be the profits allocated to $u_1, u_2, v_1, v_2$. The solutions of the system of linear inequalities (\ref{eq.system}), for non-negative values of the variables, capture all possible core imputations, i.e., the set $C(I)$.  

\begin{equation}
\label{eq.system}
	\begin{aligned}
	\alpha_1 + \beta_1 & \geq 2\\
	\alpha_1 + \beta_2 & \geq 3\\
	\alpha_1 + \beta_1 + \beta_2 & \geq 4\\
	\alpha_2 + \beta_2 & \geq 1\\
	\alpha_1 + \alpha_2 + \beta_2 & \geq 3\\
	\alpha_1 + \alpha_2 + \beta_1 + \beta_2 & = 4
\end{aligned}
\end{equation}

On solving this system, we find that $\alpha_1, \alpha_2, \beta_1, \beta_2$ should be $1 + a, \ 0 \ b, \ 1+c$, where $a, b, c$ are non-negative and satisfy the system (\ref{eq.system2}). 

\begin{equation}
\label{eq.system2}
	\begin{aligned}
	a + b & \geq 1\\
	a + c & \geq 1\\
	a+ b + c & = 2
\end{aligned}
\end{equation}

A fourth constraint, $b \leq 1$ follows from the last two in this system. The solution $a = 1, \ b=0, \ c=1$ gives the core imputation corresponding to the unique optimal dual solution; the rest give the remaining core imputations, e.g., the imputation $3, 0, 0, 1$. 

For an arbitrary instance $I$, one can clearly capture all possible core imputations via an exponential sized system of inequalities  of the type $\geq$, one corresponding to each coalition $(S_u \cup S_v)$; its r.h.s. will be $p(S_u \cup S_v)$ and its l.h.s. will be the sum of all variables denoting profits accrued to vertices in this coalition. Note that all the  variables of this system will be constrained to be non-negative and it will have one equality corresponding to the worth of the grand coalition; the latter is the last equality in system  (\ref{eq.system}).

The following question arises: is there a smaller system which accomplishes this task? We observe that it suffices to include in the system only those coalitions whose induced subgraph is connected. This is so because if the induced subgraph for coalition $(S_u \cup S_v)$ has two or more connected components, then the sum of the inequalities for the connected components yields the inequality for coalition $(S_u \cup S_v)$. In particular, if the underlying graph of instance $I$ is sparse, this may lead to a much smaller system. Observe that the system (\ref{eq.system}), for Example \ref{ex.b-uncon}, follows from this idea. 

\begin{remark}
\label{rem.char}
	Since for the unconstrained bipartite $b$-matching game, the optimal dual solutions don't capture all core imputations, the characterizations established in Theorems \ref{thm.vertices} and \ref{thm.edges} for the assignment game, don't carry over. However, if one restricts to core imputations in the set $D(I)$ only, one can see that suitable modifications of these statements do hold. 
\end{remark}

\section{The Core of the Constrained Bipartite $b$-Matching Game}
\label{sec.b-con-core}

Our results for this game are related to, though not identical with, those for the unconstrained version. In Theorem \ref{thm.b-con-SS}, we will show that corresponding to every optimal solution to the dual LP (\ref{eq.b-con-core-dual-bipartite}), there is a set of core imputations. This theorem yields Corollary \ref{cor.b-con} stating that the core of this game is also non-empty. Finally, we will give an instance which has core imputations that don't correspond to optimal solutions to the dual LP.

The corresponding to an optimal solution to the dual LP (\ref{eq.b-uncon-core-dual-bipartite}), $(u, v, z)$, we define a set of imputations as follows. For each edge $(i, j)$ define two new  variables $c_{ij}$ and $d_{ij}$; both are constrained to be non-negative. Furthermore, consider all possible ways of splitting $z_{ij}$ into $c_{ij}$ and $d_{ij}$, i.e., $z_{ij} = c_{ij} + d_{ij}$. Define the profit allocation to $i \in U$ to be 
$$\alpha_i = b_i \cdot u_i + \sum_{j: (i, j) \in E} {c_{ij}}$$ 
and that to $j \in V$ to be 
$$\beta_j = b_j \cdot v_j + \sum_{i: (i, j) \in E} {d_{ij}} .$$ 
Taken over all possible ways of splitting all $z_{ij}$s, this gives a set of imputations. 

\begin{theorem}
	\label{thm.b-con-SS}
	All profit-sharing methods $(\alpha, \beta)$, which correspond to the optimal solution $(u, v, z)$ to the dual LP (\ref{eq.b-con-core-dual-bipartite}), are imputations in the core of the constrained bipartite $b$-matching game.
\end{theorem}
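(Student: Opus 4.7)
The plan is to mirror the proof of Theorem \ref{thm.b-uncon-SS}, with the extra bookkeeping needed to handle the new dual variables $z_{ij}$ attached to the edge-capacity constraints $x_{ij} \le 1$ in LP (\ref{eq.b-con-core-primal-bipartite}). The key conceptual point is that although $z_{ij}$ is not naturally attached to either endpoint, any non-negative split $z_{ij} = c_{ij} + d_{ij}$ can be distributed between $i$ and $j$ while preserving the core inequalities, and this flexibility is exactly what produces a whole family of core imputations from a single optimal dual solution.

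First I would check that $(\alpha,\beta)$ is an imputation. By Remark \ref{rem.b-unimodular}, LP (\ref{eq.b-con-core-primal-bipartite}) has an integral optimum, which is a maximum weight constrained $b$-matching; its value equals $p(U\cup V)$. By LP duality applied to $(u,v,z)$,
\[
\sum_{i\in U} b_i u_i + \sum_{j\in V} b_j v_j + \sum_{(i,j)\in E} z_{ij} \;=\; p(U\cup V).
\]
Since $\sum_i \alpha_i + \sum_j \beta_j = \sum_i b_i u_i + \sum_j b_j v_j + \sum_{(i,j)\in E}(c_{ij}+d_{ij})$ and $c_{ij}+d_{ij}=z_{ij}$, the imputation condition holds regardless of the chosen split.

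Next I would verify the core inequality for an arbitrary sub-coalition $(S_u\cup S_v)$. Let $G'$ be the induced subgraph with edge set $E'$ and let $x'$ be an integral max weight constrained $b$-matching in $G'$ (again integral by total unimodularity), so $p(S_u\cup S_v)=\sum_{(i,j)\in E'} w_{ij} x'_{ij}$. Using the dual constraint $u_i+v_j+z_{ij}\ge w_{ij}$ on each edge of $E'$, then interchanging the order of summation and using $\sum_{(i,j)\in E'} x'_{ij}\le b_i$ and $\le b_j$ plus the bound $x'_{ij}\le 1$ to control the $z$-term, I get
\[
p(S_u\cup S_v) \;\le\; \sum_{i\in S_u} b_i u_i + \sum_{j\in S_v} b_j v_j + \sum_{(i,j)\in E'} (c_{ij}+d_{ij}).
\]
Finally, since every edge $(i,j)\in E'$ has $i\in S_u$ and $j\in S_v$, the term $c_{ij}$ appears in $\alpha_i$ (summed over $j$ adjacent to $i$) and $d_{ij}$ appears in $\beta_j$, so
\[
\sum_{i\in S_u}\alpha_i + \sum_{j\in S_v}\beta_j \;\ge\; \sum_{i\in S_u} b_i u_i + \sum_{j\in S_v} b_j v_j + \sum_{(i,j)\in E'}(c_{ij}+d_{ij}),
\]
with possible slack coming from $c$- and $d$-terms of edges leaving $S_u\cup S_v$. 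Chaining the two inequalities yields the core condition.

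The main technical obstacle is controlling the $z$-contribution. In the unconstrained proof, the dual has no edge variables and the primal/dual inequalities line up cleanly; here one must simultaneously (i)~exploit $x'_{ij}\le 1$ to replace $\sum_{(i,j)\in E'} z_{ij}x'_{ij}$ by the cleaner $\sum_{(i,j)\in E'} z_{ij}$, and (ii)~argue that this latter quantity is absorbed by the restriction of $\alpha,\beta$ to $S_u\cup S_v$ irrespective of how each $z_{ij}$ was split. Both points are handled by the observations above, and together they also yield, as a byproduct, Corollary \ref{cor.b-con} on non-emptiness of the core, since the set of allowable imputations is clearly non-empty (take e.g.\ $c_{ij}=z_{ij}$, $d_{ij}=0$).
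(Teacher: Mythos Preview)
Your proposal is correct and follows essentially the same route as the paper: verify the imputation identity via LP duality and total unimodularity, then for any sub-coalition bound $p(S_u\cup S_v)$ using the dual constraint, the edge bound $x'_{ij}\le 1$ to replace $z_{ij}x'_{ij}$ by $z_{ij}=c_{ij}+d_{ij}$, and the degree bounds to pass to $b_iu_i$, $b_jv_j$. Your remark that the final step is an inequality (with slack from $c$- and $d$-terms on edges leaving the coalition) is in fact slightly more precise than the paper, which writes that step as an equality.
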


\begin{proof}
The proof is similar to that of Theorem \ref{thm.b-uncon-SS}, though it is more involved because of the additional variables. Again, by Remark \ref{rem.b-unimodular}, there is always an integral  optimal solution to LP (\ref{eq.b-con-core-primal-bipartite}), i.e., a solution that is a maximum weight $b$-matching in $G$. Let $W$ be the weight of such a matching; clearly, $W = p(U \cup V)$.
	
	For the rest of this proof, let $(\alpha, \beta)$ be one of the profit-sharing methods that corresponds to $(u, v, z)$; the latter being an optimal solution to the dual LP, (\ref{eq.b-uncon-core-dual-bipartite}). By the LP-Duality Theorem, and since each $z_{ij}$ is split among $\alpha_i$ and $\beta_j$, we get
	\[ \sum_{i \in U} {b_i u_i} + \sum_{j \in V} {b_j v_j} + \sum_{(i, j) \in E}  {z_{ij}} = \ W \ = \sum_{i \in U} {\alpha_i} + \sum_{j \in V} {\beta_j} .\]
	 Therefore the imputation $(\alpha, \beta)$ distributes the worth of the game, $W$, among the agents. 
	 
As before, consider a sub-coalition $(S_u \cup S_v)$, with $S_u \subseteq U, S_v \subseteq V$. Let $G'$ denote the restriction of $G$ to the vertices in $(S_u \cup S_v)$ and let $E'$ be the edges of $G'$. Let $x'$ denote a maximum weight constrained $b$-matching in $G'$. Corresponding to each edge $(i, j) \in E'$, $x'_{ij}$ is $0/1$ and the total profit which this sub-coalition can generate by seceding is
\[ p(S_u \cup S_v) = \sum_{(i, j) \in E'} {w_{ij} x'_{ij}} \] 

We need to show that
\[ \sum_{(i, j) \in E'} {w_{ij} x'_{ij}} \leq \sum_{i \in S_u} {\alpha_i} + \sum_{j \in S_v} {\beta_j} ,\] 
thereby proving the theorem. 

First, we simply matters by observing that 
%complementary slackness gives that if $x_{ij} = 0$ then $z_{ij} = 0$ and $c_{ij} = d_{ij} = 0$. Furthermore, 
since $x'$ is a 0/1 vector,  

\[ z_{ij} \cdot x'_{ij}  \leq z_{ij}  = c_{ij} + d_{ij} .\]

In the proof given below, the first inequality follows from the constraint of the dual LP (\ref{eq.b-con-core-dual-bipartite}), the second uses the above-stated fact that $z_{ij} \cdot x'_{ij} \leq c_{ij} + d_{ij}$, and the third follows from the first two constraints of the primal LP (\ref{eq.b-con-core-primal-bipartite}).

\[ \sum_{(i, j) \in E'} {w_{ij} x'_{ij}} \leq \sum_{(i, j) \in E'} {(u_i + v_j + z_{ij}) \cdot x'_{ij}} \]

\[ \leq \sum_{i \in S_u} { u_i  \cdot \left( \sum_{j: (i, j) \in E'} {x'_{ij}} \right)}
+ \sum_{j \in S_v} { v_j  \cdot \left( \sum_{i: (i, j) \in E'} {x'_{ij}} \right)} +
\sum_{(i, j) \in E'} {(c_{ij} + d_{ij})} . \] 

\[ \leq \sum_{i \in S_u} {b_i \cdot u_i} + \sum_{j \in S_v} {b_j \cdot v_j} +  
\sum_{(i, j) \in E'} {(c_{ij} + d_{ij})} . \]

\[ = \sum_{i \in S_u} {\left( b_i \cdot u_i +  \sum_{j: (i, j) \in E'} {c_{ij}} \right) } + \sum_{j \in S_v} { \left( b_j \cdot v_j +  \sum_{i: (i, j) \in E'} {d_{ij}} \right) }  =  \sum_{i \in S_u} {\alpha_i} + \sum_{j \in S_v} {\beta_j} \] 
\end{proof}

\begin{corollary}
\label{cor.b-con}
	The core of the constrained bipartite $b$-matching game is always non-empty. 
\end{corollary}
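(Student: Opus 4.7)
The plan is to derive Corollary \ref{cor.b-con} directly from Theorem \ref{thm.b-con-SS}, which has already been established. The only substantive point is to ensure that an optimal solution to the dual LP (\ref{eq.b-con-core-dual-bipartite}) actually exists; once that is in hand, Theorem \ref{thm.b-con-SS} produces an explicit core imputation.

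First, I would observe that the primal LP (\ref{eq.b-con-core-primal-bipartite}) is feasible: the zero vector $x \equiv 0$ satisfies all of its constraints. Its optimum is also bounded above, since each $x_{ij}$ lies in $[0, 1]$, there are finitely many edges, and all weights $w_{ij}$ are finite. By the strong LP duality theorem, the dual LP (\ref{eq.b-con-core-dual-bipartite}) therefore admits an optimal solution $(u, v, z)$ achieving the same objective value.

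Next, I would fix any valid splitting of the variables $z_{ij}$ into non-negative $c_{ij}, d_{ij}$ with $z_{ij} = c_{ij} + d_{ij}$ (the simplest choice being $c_{ij} = z_{ij}$ and $d_{ij} = 0$ for every $(i, j) \in E$), and then form the profit allocation $\alpha_i = b_i u_i + \sum_{j : (i,j) \in E} c_{ij}$ and $\beta_j = b_j v_j + \sum_{i : (i,j) \in E} d_{ij}$ as in the hypothesis of Theorem \ref{thm.b-con-SS}. That theorem immediately certifies $(\alpha, \beta)$ as a core imputation, so in particular the core is non-empty.

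I do not anticipate a genuine obstacle here: the argument is a textbook application of strong duality followed by an invocation of the preceding theorem. The only point requiring even minimal care is confirming feasibility and boundedness of the primal so that an optimal dual $(u, v, z)$ is guaranteed to exist, and both facts are transparent from the form of LP (\ref{eq.b-con-core-primal-bipartite}).
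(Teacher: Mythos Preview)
Your proposal is correct and follows exactly the approach the paper intends: the corollary is stated immediately after Theorem~\ref{thm.b-con-SS} as a direct consequence, with no separate proof given. Your added verification that the primal LP~(\ref{eq.b-con-core-primal-bipartite}) is feasible and bounded (so an optimal dual $(u,v,z)$ exists) is the only detail the paper leaves implicit, and it is indeed as transparent as you note.
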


In this section, we will view the game defined in Example \ref{ex.b-uncon} as a constrained bipartite $b$-matching game and will again show that it has a set of core imputations which do not correspond to optimal dual solutions. The optimal matching picks edges $(u_1, v_1), (u_1, v_2)$ once each, for a total profit of 4. Unlike the unconstrained case, this time, the optimal dual is not unique. The optimal dual solutions are given by $1, 0, 0, 2-a$, for vertices $u_1, u_2, v_1, v_2$, and $0, a, 0$ for edges $(u_1, v_1), (u_1, v_2), (u_2, v_2)$, where $a \in [0, 1]$. The corresponding core imputations are $3 - b, 0, 0, 1+b$, for the four vertices $u_1, u_2, v_1, v_2$, where $b \in [0, 1]$.

As in the unconstrained case, let $\alpha_1, \alpha_2, \beta_1, \beta_2$ be the profits allocated to $u_1, u_2, v_1, v_2$. This time, the system of linear inequalities whose solutions capture all possible core imputations is given by system  (\ref{eq.system}) after replacing the first inequality by 
$$ \alpha_1 + \beta_1 \geq 1. $$
This is so because edge $(u_1, v_1)$ can be matched twice under the the unconstrained bipartite $b$-matching game, but only once under the constrained version. As before, non-negativity is imposed on all these variables. On solving this system, we find that $\alpha_1, \alpha_2, \beta_1, \beta_2$ should be $1, \ 0 \ b, \ 1+c$, where $a, b, c$ are non-negative and satisfy the system (\ref{eq.system3}).

\begin{equation}
\label{eq.system3}
	\begin{aligned}
	a + b & \geq 1\\
	a + c & \geq 2\\
	a+ b + c & = 3
\end{aligned}
\end{equation}

Solutions of this system which do not correspond to dual solutions include $1, 0, 0, 3$ and $0, 0, 1, 3$. Observe that neither of these is a core imputation for the unconstrained bipartite $b$-matching game. The method given in Section \ref{sec.b-uncon-core}, for finding a smaller system, holds for this case as well and so does Remark \ref{rem.char}.

\begin{remark}
	\label{rem.not-direct}
	In the assignment game, core imputations were precisely optimal dual solutions. On the other hand, in both versions of the bipartite $b$-matching game, core imputations are obtained from optimal dual solutions via specific operations. As stated in Remark \ref{rem.uncon-mapping}, for the unconstrained version, there is a bijection between optimal dual solutions and  core imputations in $D(I)$. In contrast, for the constrained version, the set of imputations corresponding to optimal dual solutions may not be disjoint. 
\end{remark}

Let us illustrate the last point of Remark \ref{rem.not-direct} via Example \ref{ex.b-uncon}. Consider the two optimal dual solutions obtained by setting $a = 0$ and $a = 1$, namely $1, 0, 0, 2$, for vertices $u_1, u_2, v_1, v_2$, and $0, 0, 0$ for edges $(u_1, v_1), (u_1, v_2), (u_2, v_2)$; and $1, 0, 0, 1$, for vertices $u_1, u_2, v_1, v_2$, and $0, 1, 0$ for edges $(u_1, v_1), (u_1, v_2), (u_2, v_2)$. Both these optimal duals yield the core imputation assigning profits of $2, 0, 0, 2$ for $u_1, u_2, v_1, v_2$.

\section{The Core of a General Bipartite $b$-Matching Game}

\label{sec.b-gen-core}

We will use the following theorem of Hoffman and Kruskal \cite{Hoffman2010integral} to define a general bipartite $b$-matching game and show that every optimal dual solution leads to an imputation in its core. The only place we will sacrifice generality is in assuming that $c, d \in \ZZ_+^m$, since negative entries in these vectors are not very meaningful for our game. 

\begin{theorem}
\label{thm.Hoffman}
[Hoffman and Kruskal \cite{Hoffman2010integral}] 
Let $A$ be an $n \times m$ totally unimodular matrix. Then for all integral vectors $a, b \in \ZZ_+^n$ and $c, d \in \ZZ^m$, the polyhedron
\[ x \in \R^m \ \ s.t. \ \ a \leq Ax \leq b, \ \ c \leq x \leq d \]
has all integral vertices. Conversely, if this polyhedron has all integral vertices for every choice of integral vectors $a, b, c, d$ then matrix $A$ is totally unimodular.
\end{theorem}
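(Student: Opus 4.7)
The plan is to prove both directions by leveraging the defining property of total unimodularity: every square submatrix of a totally unimodular (TU) matrix has determinant in $\{-1,0,+1\}$. A key preliminary observation is that stacking a TU matrix $A$ with $\pm I_m$ preserves total unimodularity, which one verifies by Laplace expansion along any row drawn from an $I_m$ block.

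For the forward direction, I would first rewrite the polyhedron in the compact form $P = \{x \in \R^m : \tilde{A}\,x \leq \tilde{b}\}$, where $\tilde{A}$ stacks the blocks $A$, $-A$, $I_m$, $-I_m$, and $\tilde{b}$ stacks $b$, $-a$, $d$, $-c$. By the preliminary observation $\tilde{A}$ is TU, and $\tilde{b}$ is integral by hypothesis. Any vertex $x^*$ of $P$ is the unique solution of some subsystem $\tilde{A}_B\, x^* = \tilde{b}_B$, where $\tilde{A}_B$ is a nonsingular $m \times m$ submatrix of $\tilde{A}$ and $\tilde{b}_B$ the corresponding subvector of $\tilde{b}$. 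Cramer's rule then gives
\[
x^*_j \;=\; \frac{\det(\tilde{A}_B^{(j)})}{\det(\tilde{A}_B)},
\]
where $\tilde{A}_B^{(j)}$ denotes $\tilde{A}_B$ with its $j$-th column replaced by $\tilde{b}_B$. The denominator lies in $\{-1,+1\}$ by total unimodularity, while the numerator is an integer by Laplace expansion of an integer matrix, so $x^* \in \Z^m$.

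For the converse, the plan is to show that every nonsingular $k \times k$ submatrix $B$ of $A$ satisfies $|\det(B)| = 1$, which is equivalent to $A$ being TU by an elementary argument. Fix such a $B$ with row set $R$ and column set $C$, and fix a unit vector $e_i$ in $\R^k$. I would choose a large integer shift $\mathbf{t}$ and set $c_j = d_j = 0$ for $j \notin C$ (forcing $x_j = 0$ there), set wide integer bounds $c_C, d_C$ bracketing $B^{-1} e_i + \mathbf{t}$, set $a_r = b_r = (B\mathbf{t})_r + (e_i)_r$ for $r \in R$ so as to force $B\, x_C = e_i + B\mathbf{t}$, and place sufficiently loose integer bounds on the remaining rows so they are non-binding. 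With $\mathbf{t}$ large enough that $a_r, b_r \geq 0$ and all loose bounds are respected, the resulting polyhedron has the unique vertex $x^*_C = B^{-1} e_i + \mathbf{t}$. Integrality of $x^*$ then gives $B^{-1} e_i \in \Z^k$, and varying $i$ yields $B^{-1} \in \Z^{k \times k}$. Since $\det(B)\det(B^{-1}) = 1$ with both factors integral, $|\det(B)| = 1$.

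The main obstacle will be the converse bookkeeping: one must engineer $a, b, c, d$ respecting the sign constraint $a, b \geq 0$ so that the induced vertex really encodes $B^{-1} e_i$, without spurious contributions from rows or columns outside $R \cup C$. Once the slack bounds on irrelevant coordinates are set generously and the shift $\mathbf{t}$ is chosen large enough, the argument collapses to routine linear algebra and the forward direction already established.
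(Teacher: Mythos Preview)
The paper does not prove this theorem: it is quoted verbatim from Hoffman and Kruskal \cite{Hoffman2010integral} and used as a black box to justify integrality of LP~(\ref{eq.b-gen-core-primal}). There is thus no proof in the paper to compare your proposal against.

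On its own merits, your forward direction is the standard Cramer's-rule argument and is correct. Your converse sketch, however, does not fully handle the paper's nonstandard restriction $a,b\in\Z_+^n$. You choose the shift $\mathbf{t}$ so that $a_r=b_r=(B\mathbf{t})_r+(e_i)_r\ge 0$ for $r\in R$, but for rows $r\notin R$ you also need $0\le a_r\le (Ax^*)_r$, i.e.\ $A_{r,C}\bigl(\mathbf{t}+B^{-1}e_i\bigr)\ge 0$, and nothing in your construction controls the sign of these quantities; a row of $A$ outside $R$ with negative entries on $C$ can make $(Ax^*)_r$ arbitrarily negative as $\mathbf{t}$ grows. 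Either tighten the bookkeeping to force nonnegativity on all rows simultaneously, or---more simply---note that the standard Hoffman--Kruskal statement takes $a,b\in\Z^n$, and that the $\Z_+$ appearing here is a specialization for the paper's application rather than part of the theorem you are asked to prove.
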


As in the assignment game, let $G = (U, V, E), \ w: E \rightarrow \cR_+$ be the underlying bipartite graph and edge-weight function. Let functions $a: U \cup V \rightarrow \ZZ_+$ and $b: U \cup V \rightarrow \ZZ_+$ give a lower bound and an upper bound, respectively, on the number of times a vertex can be matched. Further, let functions $c: E \rightarrow \ZZ_+$ and $d: E\rightarrow \ZZ_+$ give a lower bound and an upper bound, respectively, on the number of times an edge can be matched.

The {\em worth} of a coalition $(S_u \cup S_v)$, with $S_u \subseteq U, S_v \subseteq V$, is the weight of a maximum weight $b$-matching in the graph $G$ restricted to vertices in $(S_u \cup S_v)$ only; we will denote this by $p(S_u \cup S_v)$. The constraints on edges and vertices of this subgraph are dictated from those in $G$. $p(U \cup V)$ is called the {\em worth of the game}. The {\em characteristic function} of the game is defined to be $p: 2^{U \cup V} \rightarrow \cR_+$. Definitions \ref{def.imputation} and \ref{def.core}, defining an imputation and the core, carry over unchanged from the assignment game.  

The tennis setting, given in the Introduction and used in Section \ref{sec.b-prelim}, is again useful for this setting. Let $K$ denote the maximum $b$-value of a vertex and assume that the tennis club needs to enter mixed doubles teams into $K$ tennis tournaments with the following constraints. A team $(i, j)$ needs to play in at least $c_{ij}$ and at most $d_{ij}$ tournaments. A player $i$ needs to play in at least $a_i$ and at most $b_i$ tournaments. Once again, the goal of the tennis club is to maximize its profit over all the tournaments and hence picks a maximum weight $b$-matching in $G$. An imputation in the core gives a way of distributing the profit in such a way that no sub-coalition has an incentive to secede.

Linear program  (\ref{eq.b-con-core-primal-bipartite}) gives the LP-relaxation of the problem of finding a maximum weight $b$-matching for the general version.

		\begin{maxi}
		{} {\sum_{(i, j) \in E}  {w_{ij} x_{ij}}}
			{\label{eq.b-gen-core-primal}}
		{}
		\addConstraint{\sum_{(i, j) \in E} {x_{ij}}}{\geq a_i \quad}{\forall i \in U}
		\addConstraint{\sum_{(i, j) \in E} {x_{ij}}}{\leq b_i \quad}{\forall i \in U}
		\addConstraint{\sum_{(i, j) \in E} {x_{ij}}}{\geq a_j \quad}{\forall j \in V}
		\addConstraint{\sum_{(i, j) \in E} {x_{ij}}}{\leq b_j \quad}{\forall j \in V}
		\addConstraint{x_{ij}}{\geq c_{ij}}{\forall (i, j) \in E}
		\addConstraint{x_{ij}}{\leq d_{ij}}{\forall (i, j) \in E}
	\end{maxi}

Taking $\alpha_i$, $\beta_i$, $\alpha_j$, $\beta_j$, $\gamma_{ij}$ and $\delta_{ij}$  to be the dual variables for the first to the sixth constraints, respectively, of (\ref{eq.b-gen-core-primal}), we obtain the dual LP: 

 	\begin{mini}
		{} {\sum_{i \in U}  {(b_i \beta_i - a_i \alpha_i)} + \sum_{j \in V} {(b_j \beta_j - a_j \alpha_j)} + \sum_{(i, j) \in E}  {(d_{ij} \delta_{ij} - c_{ij} \gamma_{ij})}} 
			{\label{eq.b-gen-core-dual}}
		{}
		\addConstraint{(\beta_i - \alpha_i) + (\beta_j - \alpha_j) + (\delta_{ij} - \gamma_{ij})}{\geq w_{ij} \quad }{\forall (i, j) \in E}
		\addConstraint{\alpha_{i},\beta_i}{\geq 0}{\forall i \in U}
		\addConstraint{\alpha_{j},\beta_j}{\geq 0}{\forall j \in V}
		\addConstraint{\gamma_{ij},\delta_{ij}}{\geq 0}{\forall (i, j) \in E}
	\end{mini}

Our results for this game are related to those for the constrained version. In Theorem \ref{thm.b-gen-SS}, we will show that corresponding to every optimal solution to the dual LP (\ref{eq.b-gen-core-dual}), there is a set of core imputations, thereby showing that the core of this game is also non-empty. Since we have given instances for the unconstrained and constrained versions which have core imputations that don't correspond to optimal solutions to the dual LP, the same holds for this game as well. 

The corresponding to an optimal solution to the dual LP (\ref{eq.b-gen-core-dual}), $(\alpha, \beta, \gamma, \delta)$, we define a set of imputations as follows. For each edge $(i, j)$, define four new variables $\gamma^i_{ij}, \gamma^j_{ij}, \delta^i_{ij}, \delta^j_{ij}$, all constrained to be non-negative. Consider all possible ways of splitting $\gamma_{ij}$ into $\gamma^i_{ij}, \gamma^j_{ij}$ and of splitting $\delta_{ij}$ into $\delta^i_{ij}, \delta^j_{ij}$.

Define the profit allocation to $i \in U$ to be 
$$\mu_i = {(b_i \beta_i - a_i \alpha_i)} + \sum_{j: (i, j) \in E} {(d_{ij}\delta^i_{ij} - c_{ij} \gamma^i_{ij})} $$   
and that to $j \in V$ to be 
$$\nu_j = {(b_j \beta_j - a_j \alpha_j)} + \sum_{i: (i, j) \in E} {(d_{ij}\delta^j_{ij} - c_{ij} \gamma^j_{ij})} .$$   
Taken over all possible ways of splitting all $\gamma_{ij}$s and $\delta_{ij}$s, this gives a set of imputations.

\begin{theorem}
	\label{thm.b-gen-SS}
	All profit-sharing methods $(\mu, \nu)$, which correspond to the optimal solution $(\alpha, \beta, \gamma, \delta)$ to the dual LP (\ref{eq.b-gen-core-dual}), are imputations in the core of the general bipartite $b$-matching game.
\end{theorem}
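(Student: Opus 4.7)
My plan is to mirror the proofs of Theorems \ref{thm.b-uncon-SS} and \ref{thm.b-con-SS}, adding extra bookkeeping to handle the four additional families of dual variables. The first observation is that Theorem \ref{thm.Hoffman} plays the role that total unimodularity played in the earlier proofs: the constraint matrix of LP (\ref{eq.b-gen-core-primal}) is totally unimodular and the bounds $a,b,c,d$ are integral, so the primal attains its optimum $W$ at an integral $b$-matching, and $W = p(U \cup V)$. Strong LP-duality then equates $W$ with the dual objective value; because the splits $\gamma_{ij} = \gamma^i_{ij}+\gamma^j_{ij}$ and $\delta_{ij} = \delta^i_{ij}+\delta^j_{ij}$ merely redistribute each edge's dual contribution between its two endpoints, summing $\mu_i$ over $i \in U$ and $\nu_j$ over $j \in V$ reproduces exactly the dual objective and hence equals $W$. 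Thus $(\mu,\nu)$ divides the worth of the game.

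For the core inequality, I would fix a sub-coalition $(S_u, S_v)$, let $G' = (S_u, S_v, E')$ be the induced subgraph, and let $x'$ be an optimal $b$-matching in $G'$, so that $p(S_u \cup S_v) = \sum_{(i,j)\in E'} w_{ij}x'_{ij}$. Substituting the dual feasibility constraint $w_{ij} \le (\beta_i-\alpha_i)+(\beta_j-\alpha_j)+(\delta_{ij}-\gamma_{ij})$ and regrouping by vertex, non-negativity of the dual variables together with the sub-game feasibility of $x'$ (namely $a_i \le \sum_j x'_{ij} \le b_i$ and $c_{ij} \le x'_{ij} \le d_{ij}$) yields
\[
p(S_u \cup S_v) \;\le\; \sum_{i\in S_u}(b_i\beta_i - a_i\alpha_i) + \sum_{j\in S_v}(b_j\beta_j - a_j\alpha_j) + \sum_{(i,j)\in E'}(d_{ij}\delta_{ij} - c_{ij}\gamma_{ij}).
\]
Splitting the last sum via the $\gamma^i, \gamma^j, \delta^i, \delta^j$ variables distributes each edge's contribution to its two endpoints, reproducing exactly the portion of $\sum_{i\in S_u}\mu_i + \sum_{j\in S_v}\nu_j$ indexed by edges in $E'$.

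The main obstacle is accounting for the \emph{cross edges} in $E \setminus E'$ that have exactly one endpoint in $S_u \cup S_v$. Each such edge $(i,j)$ with $i\in S_u$ but $j \notin S_v$ contributes an extra term $d_{ij}\delta^i_{ij} - c_{ij}\gamma^i_{ij}$ to $\sum_{i\in S_u}\mu_i$, and symmetrically on the $S_v$ side. In Theorem \ref{thm.b-con-SS} the analogous cross-edge terms were automatically non-negative because only an upper-bound dual variable appeared, but here the presence of the lower-bound variable $\gamma_{ij}$ allows a cross-edge term to be negative \emph{a priori}. The plan to resolve this is to exploit complementary slackness, which for any edge with $c_{ij} < d_{ij}$ forces $\gamma_{ij}\delta_{ij} = 0$ at an optimum, and then to argue that any admissible splitting consistent with imputation non-negativity must push the $\gamma^i_{ij}$ share away from the ``outside'' endpoint, so that the cross-edge contributions are collectively non-negative. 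Once this is in place, combining with the previous display yields $\sum_{i\in S_u}\mu_i + \sum_{j\in S_v}\nu_j \ge p(S_u \cup S_v)$, which is the core inequality, and the theorem follows.
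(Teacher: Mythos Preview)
Your plan coincides with the paper's proof step for step: Hoffman--Kruskal for integrality, LP duality for the total worth, the dual constraint on each edge of $E'$, and then the two ``box'' estimates $(\beta_i-\alpha_i)\sum_j x'_{ij}\le b_i\beta_i-a_i\alpha_i$ and $(\delta_{ij}-\gamma_{ij})x'_{ij}\le d_{ij}\delta_{ij}-c_{ij}\gamma_{ij}$. At the very last step the paper writes
\[
\sum_{i\in S_u}\Bigl((b_i\beta_i-a_i\alpha_i)+\!\!\sum_{j:(i,j)\in E'}\!\!(d_{ij}\delta^i_{ij}-c_{ij}\gamma^i_{ij})\Bigr)+\text{(sym.)}\;=\;\sum_{i\in S_u}\mu_i+\sum_{j\in S_v}\nu_j,
\]
silently replacing the inner sums over $E'$ by the sums over all of $E$ that actually define $\mu_i$ and $\nu_j$. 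That is precisely your ``cross-edge'' obstacle, and the paper offers no justification for it; it simply asserts equality. So you have located a point the paper glosses over rather than invented a difficulty.

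That said, your proposed repair cannot work as written. The theorem claims that \emph{every} non-negative splitting $\gamma_{ij}=\gamma^i_{ij}+\gamma^j_{ij}$, $\delta_{ij}=\delta^i_{ij}+\delta^j_{ij}$ yields a core imputation, so you are not free to ``push $\gamma^i_{ij}$ away from the outside endpoint'': the split is given data, not a choice, and which endpoint is ``outside'' varies with the coalition $(S_u,S_v)$. Complementary slackness only tells you that at most one of $\gamma_{ij},\delta_{ij}$ is positive when $c_{ij}<d_{ij}$; for an edge with $\gamma_{ij}>0$ and $\delta_{ij}=0$, a cross edge with $i\in S_u$, $j\notin S_v$ still contributes $-c_{ij}\gamma^i_{ij}\le 0$ to $\mu_i$ under a generic split, and the single aggregate constraint $\mu_i\ge 0$ does not force these edge-by-edge terms to be non-negative. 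In short: up to the last line your argument and the paper's are the same; the gap you flag is real; neither your complementary-slackness patch nor the paper's bare equality actually closes it for arbitrary splittings.
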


\begin{proof}
The proof is similar to that of Theorem \ref{thm.b-con-SS}, though it is more involved because of the additional variables. Again, by Theorem \ref{thm.Hoffman}, there is always an integral  optimal solution to LP (\ref{eq.b-gen-core-primal}). This is a maximum weight generalized $b$-matching in $G$. Let $W$ be the weight of such a matching; clearly, $W = p(U \cup V)$.
	
Let $(\mu, \nu)$ be one of the profit-sharing methods which correspond to the optimal dual solution $(\alpha, \beta, \gamma, \delta)$. As in the proof of Theorem \ref{thm.b-con-SS}, by the LP-Duality Theorem we get that imputation $(\mu, \nu)$ distributes the worth of the game, $W$, among the agents. 
	 
Consider a sub-coalition $(S_u \cup S_v)$, with $S_u \subseteq U, S_v \subseteq V$. Let $G'$ denote the restriction of $G$ to the vertices in $(S_u \cup S_v)$ and let $E'$ be the edges of $G'$. Let $x'$ denote a maximum weight constrained $b$-matching in $G'$. Corresponding to each edge $(i, j) \in E'$, $x'_{ij}$ is $0/1$ and the total profit which this sub-coalition can generate by seceding is
\[ p(S_u \cup S_v) = \sum_{(i, j) \in E'} {w_{ij} x'_{ij}} \] 

We need to show that
\[ \sum_{(i, j) \in E'} {w_{ij} x'_{ij}} \leq \sum_{i \in S_u} {\mu_i} + \sum_{j \in S_v} {\nu_j} ,\] 
thereby proving the theorem. 

First observe that since $c_{ij} \leq x'_{ij} \leq d_{ij}$, we get that 
\[ (\delta_{ij} - \gamma_{ij}) \cdot x'_{ij}  \leq d_{ij} \delta_{ij} - c_{ij} \gamma_{ij} .\]
Also, for each $i \in S_U$, by the first two constraints of the primal LP (\ref{eq.b-gen-core-primal}), $a_i \leq \sum_{j: (i, j) \in E'} {x'_{ij}} \leq b_i$. Therefore we get that 
\[ {(\beta_i - \alpha_i) \cdot \left( \sum_{j: (i, j) \in E'} {x'_{ij}} \right)} \leq b_i \beta_{ij} - a_i \alpha_{ij} .\]
An analogous statement holds for each $j \in S_V$.

In the proof given below, the first inequality follows from the constraint of the dual LP (\ref{eq.b-gen-core-dual}), and the second and third follow from the facts stated above. 

\[ \sum_{(i, j) \in E'} {w_{ij} x'_{ij}} \leq \sum_{(i, j) \in E'} {((\beta_i - \alpha_i) + (\beta_j - \alpha_j) + (\delta_{ij} - \gamma_{ij})) \cdot x'_{ij}} \]

\[ \leq \sum_{i \in S_u} {(\beta_i - \alpha_i) \cdot \left( \sum_{j: (i, j) \in E'} {x'_{ij}} \right)}
+ \sum_{j \in S_v} {(\beta_j - \alpha_j) \cdot \left( \sum_{i: (i, j) \in E'} {x'_{ij}} \right)} +
\sum_{(i, j) \in E'} {(d_{ij} \delta_{ij} - c_{ij} \gamma_{ij})} . \] 

\[ \leq \sum_{i \in S_u} {(b_i \beta_i - a_i \alpha_i)} + \sum_{j \in S_v}  + {(b_j \beta_j - a_j \alpha_j)} + \sum_{(i, j) \in E'} {(d_{ij} \delta_{ij} - c_{ij} \gamma_{ij})} . \]

\[ = \sum_{i \in S_u} {\left( {(b_i \beta_i - a_i \alpha_i)} + \sum_{j: (i, j) \in E'} {(d_{ij}\delta^i_{ij} - c_{ij} \gamma^i_{ij})} \right) } + \sum_{j \in S_v} { \left( {(b_j \beta_j - a_j \alpha_j)} +  \sum_{i: (i, j) \in E'} {(d_{ij}\delta^j_{ij} - c_{ij} \gamma^j_{ij})} \right) } \]
\[ =  \sum_{i \in S_u} {\mu_i} + \sum_{j \in S_v} {\nu_j} \] 
\end{proof}

\begin{corollary}
\label{cor.b-gen}
	The core of the general bipartite $b$-matching game is always non-empty. 
\end{corollary}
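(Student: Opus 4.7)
The plan is to adapt the proof of Theorem \ref{thm.b-con-SS} to the richer dual of LP (\ref{eq.b-gen-core-primal}), which now carries four families of non-negative dual variables: $\alpha,\beta$ for the vertex lower and upper bounds and $\gamma,\delta$ for the edge lower and upper bounds. I would verify in turn that $(\mu,\nu)$ is an imputation and that it satisfies the core inequality for every sub-coalition.

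First I would check that $\sum_{i\in U}\mu_i + \sum_{j\in V}\nu_j = p(U\cup V)$. By Theorem \ref{thm.Hoffman}, LP (\ref{eq.b-gen-core-primal}) always admits an integral optimum, whose value is $W := p(U\cup V)$; by LP duality, the dual objective at $(\alpha,\beta,\gamma,\delta)$ also equals $W$. Summing the definitions of $\mu_i$ and $\nu_j$ and invoking the splittings $\gamma^i_{ij}+\gamma^j_{ij}=\gamma_{ij}$ and $\delta^i_{ij}+\delta^j_{ij}=\delta_{ij}$ reassembles precisely the dual objective, so the totals match.

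Next, for an arbitrary sub-coalition $S_u\cup S_v$, let $G'=(S_u,S_v,E')$ be its induced subgraph and let $x'$ be an optimal generalized $b$-matching in $G'$ (with the inherited per-vertex and per-edge bounds), so $p(S_u\cup S_v)=\sum_{(i,j)\in E'} w_{ij}x'_{ij}$. Starting from the dual feasibility constraint
\[ (\beta_i-\alpha_i)+(\beta_j-\alpha_j)+(\delta_{ij}-\gamma_{ij})\ \geq\ w_{ij}, \]
I would multiply through by $x'_{ij}\geq 0$ and sum over $E'$. Then I apply two elementary inequalities coming from dual non-negativity and the primal constraints for $x'$: at each $i\in S_u$, since $a_i\leq\sum_{j:(i,j)\in E'} x'_{ij}\leq b_i$ and $\alpha_i,\beta_i\geq 0$, one has $(\beta_i-\alpha_i)\sum_j x'_{ij}\leq b_i\beta_i-a_i\alpha_i$; at each edge $(i,j)\in E'$, since $c_{ij}\leq x'_{ij}\leq d_{ij}$ and $\gamma_{ij},\delta_{ij}\geq 0$, one has $(\delta_{ij}-\gamma_{ij})x'_{ij}\leq d_{ij}\delta_{ij}-c_{ij}\gamma_{ij}$. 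Using the splittings to attribute each edge quantity to its two endpoints and regrouping by vertex should identify $\sum_{i\in S_u}\mu_i+\sum_{j\in S_v}\nu_j$ on the right, completing the inequality.

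The main obstacle is a mismatch I would have to resolve at the end: $\mu_i$ sums $d_{ij}\delta^i_{ij}-c_{ij}\gamma^i_{ij}$ over \emph{every} edge of $E$ incident to $i$, whereas the chain of inequalities above only contributes the sum over edges of $E'$. Because the per-edge summand can be negative when $c_{ij}>0$, dropping the boundary-crossing edges is not automatic. I would close the gap either by restricting attention to splittings that satisfy $d_{ij}\delta^i_{ij}\geq c_{ij}\gamma^i_{ij}$ edgewise, or by appealing to complementary slackness at the dual optimum so that the offending slacks are forced to zero on non-matched edges. I would also need to dispatch at the outset the degenerate case where $G'$ has no feasible generalized $b$-matching, in which the sub-coalition cannot secede and the core condition is vacuous. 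Corollary \ref{cor.b-gen} then follows immediately, since the dual LP has at least one optimum.
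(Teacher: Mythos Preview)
Your approach is the same as the paper's: the corollary is obtained from Theorem~\ref{thm.b-gen-SS}, whose proof uses Hoffman--Kruskal integrality, LP-duality for the total worth, and then the two bounding observations you wrote down, namely $(\delta_{ij}-\gamma_{ij})x'_{ij}\le d_{ij}\delta_{ij}-c_{ij}\gamma_{ij}$ from $c_{ij}\le x'_{ij}\le d_{ij}$, and $(\beta_i-\alpha_i)\sum_j x'_{ij}\le b_i\beta_i-a_i\alpha_i$ from $a_i\le\sum_j x'_{ij}\le b_i$, followed by regrouping into $\sum\mu_i+\sum\nu_j$.

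You have in fact been more careful than the paper on the point you flag as an ``obstacle.'' The paper's final displayed line passes from sums over $E'$ to the defining sums of $\mu_i,\nu_j$, which range over all of $E$, without comment; when $c_{ij}>0$ and $\gamma_{ij}>0$ on a boundary edge, the extra term $d_{ij}\delta^{i}_{ij}-c_{ij}\gamma^{i}_{ij}$ can indeed be negative, so this step is not automatic. Thus the gap you identify is present in the paper's own argument, not a shortcoming of your outline relative to it. Your feasibility caveat for $G'$ under positive lower bounds is likewise a genuine issue the paper does not discuss. For the purposes of the corollary alone (existence of \emph{some} core imputation), your first proposed fix---choosing a single splitting with edgewise non-negative contributions---is the natural route, but note that complementary slackness by itself does not force $d_{ij}\delta_{ij}-c_{ij}\gamma_{ij}\ge 0$, so this still requires an argument.
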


As stated above, since we have given instances for the unconstrained and constrained bipartite $b$-matching games which have core imputations that don't correspond to optimal solutions to their dual LPs, the same holds for this game as well.

\section{Discussion}
\label{sec.discussion}

Our most important open question is to shed light on the origins of core imputations, for the two bipartite $b$-matching games, which do not correspond to optimal dual solutions. Is there a ``mathematical structure'' that produces them? A related question is to determine the complexity of the following question for these two games: Given an imputation for a game, decide if it belongs to the core. We believe this question should be co-NP-complete. On the other hand, the following question is clearly in P: Given an imputation for a game $I$, decide if it lies in $D(I)$. 

As stated in Section \ref{sec.matching-game}, for the assignment game, Shapley and Shubik were able to characterize ``antipodal'' points in the core. An analogous understanding of the core of the general graph matching games having non-empty core will be desirable. 

For the assignment game, Demange, Gale and Sotomayor \cite{Demange1986multi} give an auction-based procedure to obtain a core imputation; it turns out to be optimal for the side that proposes, as was the case for the deferred acceptance algorithm of Gale and Shapley \cite{GaleS} for stable matching. Is there an analogous procedure for obtaining an imputation in the core of the general graph matching games having non-empty core?

\section{Acknowledgements}
\label{sec.ack}

I wish to thank Herv\'e Moulin for asking the interesting question of extending results obtained for the assignment game to general graph matching games having a non-empty core; these results are presented in Section \ref{sec.general}. I also wish to thank Federico Echenique, Herv\'e Moulin and Thorben Trobst for several valuable discussions.

	\bibliographystyle{alpha}
	\bibliography{refs}

\end{document}